\documentclass[11pt]{article}
\usepackage{amssymb,amsmath,amsfonts,amssymb,amsthm}
\usepackage{graphics,graphicx,color}
\usepackage{empheq}
\usepackage{slashed}
\usepackage{tikz,tkz-fct}

\usepackage{hyperref} 
\usepackage{algorithm}
\usepackage{algpseudocode}
\usepackage{footnote}
\usepackage{caption,subcaption}

\usepackage[margin=1.0in]{geometry}

\newtheorem{theorem}{Theorem}[section]
\newtheorem{lemma}[theorem]{Lemma}
\newtheorem{proposition}[theorem]{Proposition}

\newtheorem{remark}[theorem]{Remark}
\newtheorem{hypothesis}[theorem]{Hypothesis}


\def \Rm {\mathbb R}
\def \Nm {\mathbb N}
\def \Cm {\mathbb C}
\def \Zm {\mathbb Z}

\def \Mm {\mathbb M}
\newcommand{\eps}{\varepsilon}

\newcommand{\dsum}{\displaystyle\sum}
\newcommand{\dint}{\displaystyle\int}

\newcommand{\mF}{\mathcal F}

\newcommand{\mH}{\mathcal H}

\newcommand{\mK}{\mathcal K}

\newcommand{\rI}{{\rm I}}
\newcommand{\rJ}{{\rm J}}

\newcommand{\fa}{{\mathfrak a}}

\newcommand{\fS}{{\mathfrak S}}

\newcommand{\VP}{Q}

\newcommand{\ind}{{\rm Index\,}}

\newcommand{\cout}[1]{}

\newcommand{\sgn}[1]{\,{\rm sign}(#1)}

\newcommand{\Tr}{{\rm Tr}}

\newcommand{\rP}{{\rm P}}

\newcommand{\epm}{\epsilon_m}
\newcommand{\psin}{\psi_{\rm in}}
\newcommand{\psiout}{\psi_{\rm out}}



\newcommand{\R}{{\rm R}}

\newcommand{\trr}{{\rm tr}}


\makeatletter
\let\save@mathaccent\mathaccent
\newcommand*\if@single[3]{%
  \setbox0\hbox{${\mathaccent"0362{#1}}^H$}%
  \setbox2\hbox{${\mathaccent"0362{\kern0pt#1}}^H$}%
  \ifdim\ht0=\ht2 #3\else #2\fi
  }
\newcommand*\rel@kern[1]{\kern#1\dimexpr\macc@kerna}
\newcommand*\widebar[1]{\@ifnextchar^{{\wide@bar{#1}{0}}}{\wide@bar{#1}{1}}}
\newcommand*\wide@bar[2]{\if@single{#1}{\wide@bar@{#1}{#2}{1}}{\wide@bar@{#1}{#2}{2}}}
\newcommand*\wide@bar@[3]{%
  \begingroup
  \def\mathaccent##1##2{%
    \let\mathaccent\save@mathaccent
    \if#32 \let\macc@nucleus\first@char \fi
    \setbox\z@\hbox{$\macc@style{\macc@nucleus}_{}$}%
    \setbox\tw@\hbox{$\macc@style{\macc@nucleus}{}_{}$}%
    \dimen@\wd\tw@
    \advance\dimen@-\wd\z@
    \divide\dimen@ 3
    \@tempdima\wd\tw@
    \advance\@tempdima-\scriptspace
    \divide\@tempdima 10
    \advance\dimen@-\@tempdima
    \ifdim\dimen@>\z@ \dimen@0pt\fi
    \rel@kern{0.6}\kern-\dimen@
    \if#31
      \overline{\rel@kern{-0.6}\kern\dimen@\macc@nucleus\rel@kern{0.4}\kern\dimen@}%
      \advance\dimen@0.4\dimexpr\macc@kerna
      \let\final@kern#2%
      \ifdim\dimen@<\z@ \let\final@kern1\fi
      \if\final@kern1 \kern-\dimen@\fi
    \else
      \overline{\rel@kern{-0.6}\kern\dimen@#1}%
    \fi
  }%
  \macc@depth\@ne
  \let\math@bgroup\@empty \let\math@egroup\macc@set@skewchar
  \mathsurround\z@ \frozen@everymath{\mathgroup\macc@group\relax}%
  \macc@set@skewchar\relax
  \let\mathaccentV\macc@nested@a
  \if#31
    \macc@nested@a\relax111{#1}%
  \else
    \def\gobble@till@marker##1\endmarker{}%
    \futurelet\first@char\gobble@till@marker#1\endmarker
    \ifcat\noexpand\first@char A\else
      \def\first@char{}%
    \fi
    \macc@nested@a\relax111{\first@char}%
  \fi
  \endgroup
}
\makeatother
\newcommand{\sm}{{s_m}}


\title{$\Zm_2$ classification of FTR symmetric differential operators and obstruction to Anderson localization}
\author{Guillaume Bal \thanks{Departments of Statistics and Mathematics and Committee on Computational and Applied Mathematics, University of Chicago, Chicago, IL 60637; guillaumebal@uchicago.edu} \and Zhongjian Wang \thanks{Division of Mathematical Sciences, School of Physical and Mathematical Sciences, Nanyang Technological University, Singapore 637371; zhongjian.wang@ntu.edu.sg} }

\begin{document}

\maketitle

\begin{abstract}
 This paper concerns the $\Zm_2$ classification of Fermionic Time-Reversal (FTR) symmetric partial differential Hamiltonians on the Euclidean plane. We consider the setting of two insulators separated by an interface. Hamiltonians that are invariant with respect to spatial translations along the interface are classified into two categories depending on whether they may or may not be gapped by continuous deformations. Introducing a related odd-symmetric Fredholm operator, we show that the classification is stable against FTR-symmetric perturbations. 
 
 The property that non-trivial Hamiltonians cannot be gapped may be interpreted as a topological obstruction to Anderson localization: no matter how much (spatially compactly supported) perturbations are present in the system, a certain amount of transmission in both directions is guaranteed in the nontrivial phase. We present a scattering theory for such systems and show numerically that transmission is indeed guaranteed in the presence of FTR-symmetric perturbations while it no longer is for non-symmetric fluctuations.
\end{abstract}

\noindent{\bf Keywords:} Topological Insulators, $\Zm_2$ invariant, Fermionic Time Reversal Symmetry, Partial Differential Operators, Scattering theory, Anderson localization

\section{Introduction}
\label{sec:introduction}
A salient feature of (Chern) topological insulators is the robust asymmetric transport observed at one-dimensional interfaces separating distinct two-dimensional insulators. In many settings, this robustness may be given a topological interpretation. For large classes of discrete and continuous Hamiltonians, a quantized interface conductivity as well as a number of Fredholm operators with non-trivial indices may be introduced to compute topological invariants and relate them to the observed quantized, robust-to-perturbations, asymmetric transport along the interface; see, e.g. \cite{Elbau,PS,SB-2000} for discrete Hamiltonians and \cite{2,bal2022topological,bal2023topological,Drouot:19,Drouot:19b,drouot2020microlocal,Hatsugai,QB22} for partial- or pseudo- differential Hamiltonians. More general results on the vast and active field of topological phases of matter may be found in, e.g.,  \cite{ASS90,BH,bourne2017k,chiu2016classification,delplace,kitaev2009periodic,moessner2021topological,thiang2016k,Volovik,WI}.

The above robust asymmetric transport is surprising as it may be seen as an obstruction to the Anderson localization observed in topologically trivial systems; see \cite{aizenman1993localization,carmona2012spectral,fouque2007wave,frohlich1983absence,germinet2001bootstrap,lagendijk2009fifty,sheng2007introduction} for pointers to the large bibliography on Anderson localization and \cite{topological,ludwig2013lyapunov,PS} for references on such topological obstructions. A possible interpretation of Anderson localization is that transmission through a slab of random heterogeneities is exponentially small in the thickness of the slab with a coefficient of proportionality related to the localization length; see, e.g., \cite{fouque2007wave}. A non-trivial interface conductivity implies that robust transport in at least one direction occurs independently of the thickness of the random slab, which leads to a quantized obstruction to Anderson localization; see \cite{topological}  where the phenomenon is analyzed for randomly perturbed Dirac models.

For time-reversal symmetric Hamiltonians, this interface transport is no longer asymmetric. Yet, robust transport may still be observed even in the presence of strong random fluctuations. Since the above invariants based on indices of Fredholm operators necessarily vanish, they cannot explain such transport immunity to random fluctuations. An other index, based on symmetry protection and taking for the form of a $\Zm_2$ invariant, may then be introduced. A $\Zm_2$ index classifies systems into two classes: trivial or non-trivial. Such a symmetry protection occurs for Hamiltonians that are Fermionic Time Reversal (FTR) symmetric. This symmetry is characterized by an anti-unitary operator $\theta$ such that $\theta^2=-I$. FTR-symmetric Hamiltonians $H$ are then those obeying the commutation relation: $\theta H = H\theta$.

Following the pioneering works in \cite{PhysRevLett.95.146802}, crystals with FTR-symmetric Hamiltonians have been analyzed in great detail in many works; see, e.g., \cite{BH,PhysRevB.76.045302,RevModPhys.82.3045,schulz2015z2}, with techniques allowing us to compute the class of a bulk Hamiltonian from its spectral decomposition on a Brillouin zone. The FTR-symmetry is one among a large class of possible symmetries \cite{chiu2016classification,kitaev2009periodic}. It is natural in electronic structures with strong spin-orbit couplings \cite{BH}. This symmetry should not be confused with the more standard Time Reversal symmetry, characterized by an anti-unitary operator $\tau$ such that $\tau^2=1$, and for which the results of this paper do not apply.

\medskip

The main objective of this paper is to propose a $\Zm_2$ classification for partial differential Hamiltonians modeling a transition between two insulators. We do not identify bulk phases but rather assign a topological invariant to the interface separating them. This follows the framework proposed in \cite{bal2022topological,bal2023topological,QB22}, where a $\Zm$ classification of interface Hamiltonians is obtained even when bulk phases may not be identified. Our approach partially builds on the $\Zm_2$ index for odd-symmetric Fredholm operators in \cite{schulz2015z2}. We consider differential operators acting on (vector-valued) functions of the Euclidean plane parametrized by $(x,y)$. The Hamiltonian models a transition from an insulator in the domain $y\gtrsim1$ to an other insulator in the domain $y\lesssim -1$. Excitations are allowed to propagate along the $x-$axis in the vicinity of $y\approx0$.

We first consider Hamiltonians that are invariant against spatial translations along the $x-$axis  (i.e., commute with spatial translations in $x$) and assume that they may be written in the form $H= H_1\oplus \theta^* H_1\theta$. The operator $H_1$ is typically straightforward to obtain in practical applications. The $\Zm_2$ classification is then based on whether $H$ may be continuously deformed (in the uniform sense) to a gapped system or not along a curve of FRT-symmetric Hamiltonians. A gappable system (for a fixed energy range) is considered trivial while a non-gappable system is considered non-trivial; see also \cite{kane2013topological,schulz2015z2}.

Moreover, the $\Zm_2$ invariant may be computed explicitly from the topological index associated to the (non-FTR-symmetric) operator $H_1$. More precisely, the $\Zm_2$ index is given by $(-1)^{2\pi \sigma_I[H_1]}$ where $\sigma_I[H_1]$ is a bulk-difference (Chern-type) invariant that only depends on the symbol of the operator $H_1$ and may be easily computed, for instance as an application of an explicit Fedosov-H\"ormander formula as developed in  \cite{bal2022topological,bal2023topological,QB22}.

\medskip

Following \cite{schulz2015z2}, we next associate to $H$ a Fredholm operator $T$ and show that the $\Zm_2$ index of $H$ is related to the $\Zm_2$ index of the odd-symmetric Fredholm operator $T$ given by $\ind_2T=$ dim Ker $T$ mod $2$. The non-trivial category is characterized by $\ind_2T=1$ then. We then show that the index of the operator $T$ remains stable when $H$ is deformed to a perturbed operator $H+V$ for $V$ a FTR-symmetric perturbation that is compactly supported in the $x$-variable. This shows the stability of the  $\Zm_2$ index against perturbations.

\medskip

While interesting in its own right, the above $\Zm_2$ index is not directly related to any transport along the $x-$axis or transmission across a slab of random fluctuations. Following the construction in \cite{chen2023scattering}, we propose a scattering theory for a class of Hamiltonians $H$. We present a construction of generalized eigenfunctions in the presence of perturbations, describe how this is used to construct a scattering matrix for such operators, and show how the  $\Zm-$index $2\pi \sigma_I[H_1]$ and the $\Zm_2-$index of $H$ may be computed explicitly from the scattering matrix. 

We finally present a numerical algorithm that allows for a fast and accurate computation of the scattering coefficients. The algorithm is applied to a number of Hamiltonians generalizing the Dirac operators analyzed in \cite{bal2023asymmetric}, to which we refer for some details of the construction. The numerical simulations illustrate the main theoretical findings of this paper, namely that non-trivial Hamiltonians, even in the presence of perturbations, display robust transmission across a slab of randomness. As such, a non-trivial $\Zm_2$ index may be seen as a topological obstruction to Anderson localization, as analyzed for FTR-symmetric versions of the Dirac operator in \cite{topological}. In contrast, FTR-symmetric operators perturbed by non-FTR-symmetric fluctuations no longer display such a robust transmission. For such operator, our numerical simulations display results in perfect agreement with standard Anderson localization (exponential attenuation of the transmitted signal as the thickness of the random slab increases).

\medskip

The rest of the paper is structured as follows. The $\Zm_2$ classification of operators of the form $H=H_1+\theta H_1\theta^*$ and $H_V=H+V$ is presented in section \ref{sec:Z2}. The scattering theory for a subset of such Hamiltonians is then given in section \ref{sec:scattering}.  The scattering matrix directly encodes transmission information across a slab of random perturbations. We present a numerical algorithm in section \ref{sec:num} and numerical computations of scattering matrices in section \ref{sec:examples} as illustrations of the theories on robust transport protected by the FTR symmetry.

\section{$\Zm_2$ invariant and spectral decomposition}
\label{sec:Z2}

\paragraph{Fermionic Time Reversal Symmetry.}

Let $\theta=\mK \rJ$ be an operator modeling Fermionic time reversal symmetry on a complex separable Hilbert space $\mH$ with $\mK$ complex conjugation and $\rJ$ a unitary operator such that $\rJ^2=-\rI$, with $\rI$ the identity matrix, or equivalently $\rJ^*=-\rJ$. Since the eigenvalues of $\rJ$ are $\pm i$, we can always choose a basis such that $\rJ$ takes the form $\rJ=i\sigma_2\otimes \rI=\sigma_+\otimes \rI-\sigma_-\otimes \rI$, where we have defined $\sigma_\pm=\frac12(\sigma_1\pm i\sigma_2)$. Note that $\rJ\mK=\mK \rJ$.  The decomposition implies a direct sum (grading)  $\mH=\mH_+\oplus \mH_-$ such that for $\psi=(\psi_+,\psi_-)^t$ in that direct sum, then $\rJ\psi = (\psi_-,-\psi_+)^t$. 
We verify that $\theta$ is anti-linear, i.e., $\theta(\alpha\psi)=\bar\alpha\theta(\psi)$ and the following identities hold: $\theta^2=-\rI$ and $\theta^*=-\mK J=-J \mK=-\theta$.

An operator $T$ is said to be {\em odd-symmetric} when the following equivalent relations hold:
\begin{equation}\label{eq:oddsymmetric}
  \theta^* T \theta = T^*,\quad T\theta = \theta T^*,\quad -\rJ\bar T \rJ= T^*,\quad -\rJ T \rJ = T^t,\quad T\rJ=\rJ T^t.
\end{equation}
We defined here $\bar T:=\mK T \mK$. 

A Hamiltonian $H$ is said to be {\em Fermionic time-reversal (FTR) symmetric} when it is odd-symmetric and Hermitian. In other words, $H=H^*$ or $\bar H=H^t$ with 
\begin{equation}\label{eq:FTR}
  \theta^* H \theta = H,\quad H\theta = \theta H,\quad -\rJ\bar H \rJ= H,\quad -\rJ H \rJ = \bar H,\quad H\rJ=\rJ\bar H.
\end{equation}

The Hamiltonians of interest here are elliptic differential or pseudo-differential operators, which are self-adjoint unbounded operator on $\mH$ and describe a transition between two insulating regions in two space dimensions. With a parametrization $(x,y)$ of Euclidean space $\Rm^2$, the interface is modeled by $y\sim0$ while the north and south insulating regions are modeled by $\pm y \gtrsim 1$, respectively. Our main objective is to describe current along the $x-$axis for a large class of FTR-symmetric Hamiltonians.

\paragraph{Interface conductivity.}  The two regions away from the interface are assumed insulating in an energy range $I:=[E_-,E_+]$ with $E_-<E_+$. An interface conductivity describing current along this interface is defined as \cite{Elbau,PS,SB-2000}
\begin{equation}\label{eq:sigmaI}
 \sigma_I[H] = \Tr\ i[H,P] \varphi'(H).
\end{equation}
The observable $i[H,P]$ is a current operator (from left to right along the $x$ axis) with $P=P(x)\in \mathfrak{S}[0,1]$ while $\varphi'(H)$ is a density supported in the energy range $I$ and integrating to $1$, in other words $\varphi\in \mathfrak{S}[0,1;E_-,E_+]$. Here, we denote by $\mathfrak{S}[c_1,c_2;\lambda_1,\lambda_2]$ the set of {\em switch} functions, i.e., real-valued functions $f$ on $\mathbb{R}$ for which there exists $\delta>0$ such that $f(x)=c_1$ for all $x\leq\lambda_1+\delta$ and $f(x)=c_2$ for all $x\geq\lambda_2-\delta$. The union over $\lambda_1<\lambda_2$ is denoted by $\mathfrak{S}[c_1,c_2]$.  We consider Hamiltonians such that $i[H,P] \varphi'(H)$ is a trace-class operator (compact operator with summable singular values) and $2\pi\sigma_I\in\Zm$ is defined and quantized as shown in \cite{bal2022topological,bal2023topological,QB22}. 

However, for FTR-symmetric operators, the above conductivity vanishes. 
\begin{lemma} \label{lem:symsigma} 
For self-adjoint Hamiltonians on $\mH$ such that $\theta^* H \theta = H$ and $i[H,P] \varphi'(H)$ is a trace-class operator, we have $\sigma_I[H]=0$.
\end{lemma}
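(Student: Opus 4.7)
The plan is to exploit the anti-unitarity of $\theta$ together with the FTR-invariance of $H$, $\varphi'(H)$, and $P$ to show that the operator $T:=i[H,P]\varphi'(H)$ satisfies $\theta^* T\theta=-T$, and then to read off from this that the trace must vanish.

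First I would record the general fact that for any anti-unitary operator $U$ and any trace-class operator $A$ one has $\Tr(U^*AU)=\overline{\Tr(A)}$. This follows by expanding the trace in an orthonormal basis $\{e_n\}$, noting that $\{Ue_n\}$ is again orthonormal, and using the identity $\langle U^{-1}x,U^{-1}y\rangle=\overline{\langle x,y\rangle}$ characteristic of anti-unitary operators. Applied to $U=\theta$, this gives $\Tr(\theta^* T\theta)=\overline{\Tr(T)}$.

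Next I would verify that $T$ is odd under conjugation by $\theta$. By hypothesis $\theta^*H\theta=H$, so by the functional calculus $\theta^*\varphi'(H)\theta=\varphi'(H)$ (writing $\varphi'(H)$ via a contour integral of the resolvent makes this transparent, and uses the real-valuedness of $\varphi'$). The operator $P=P(x)$ is multiplication by a real-valued scalar function of the base coordinate $x$; since $\theta=\mK\rJ$ with $\mK$ complex conjugation and $\rJ$ acting only on the internal fiber, $P$ commutes with both $\mK$ and $\rJ$, and hence $\theta^* P\theta=P$. Finally, the anti-linearity of $\theta^*$ yields $\theta^*(iB)\theta=-i\,\theta^*B\theta$ for any linear $B$. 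Combining these three observations,
\begin{equation*}
 \theta^*T\theta = \theta^*\bigl(i[H,P]\varphi'(H)\bigr)\theta = -i\,[H,P]\varphi'(H) = -T.
\end{equation*}

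Taking traces and using the first step, $\overline{\Tr(T)}=\Tr(\theta^*T\theta)=-\Tr(T)$, so $\sigma_I[H]=\Tr(T)$ is purely imaginary. It remains to show $\sigma_I[H]$ is real, after which $\sigma_I[H]=0$ follows immediately. For this one can either invoke the quantization $2\pi\sigma_I[H]\in\Zm$ established in \cite{bal2022topological,bal2023topological,QB22}, or argue directly: since $[H,P]^*=-[H,P]$ and $\varphi'(H)$ is self-adjoint, $T^*=i\varphi'(H)[H,P]$, and the cyclicity of the trace on trace-class operators gives $\overline{\Tr(T)}=\Tr(T^*)=\Tr(i\varphi'(H)[H,P])=\Tr(i[H,P]\varphi'(H))=\Tr(T)$.

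The only subtle point I anticipate is the justification that the manipulations respect trace-class membership throughout — in particular, that $\theta^*T\theta$ is trace class (immediate from $\theta$ being a bijective isometry up to conjugation) and that cyclicity applies to the factorizations used. These are standard once $i[H,P]\varphi'(H)$ is assumed trace class, which is precisely the hypothesis of the lemma.
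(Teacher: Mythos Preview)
Your proposal is correct and follows essentially the same route as the paper: both conjugate $T=i[H,P]\varphi'(H)$ by $\theta$, use that $P$ and $\varphi'(H)$ are $\theta$-invariant while the factor $i$ flips sign under the anti-linear conjugation, and then compare traces. The only difference is that you are more careful about one point: for an anti-unitary $\theta$ one has $\Tr(\theta^*A\theta)=\overline{\Tr A}$ (as your basis argument shows, and as one checks directly from $\theta^*A\theta=-\rJ\bar A\rJ$), whereas the paper's proof writes $\Tr(\theta^*A\theta)=\Tr A$ without the conjugate. Your added step establishing that $\sigma_I[H]$ is real (via $T^*=i\varphi'(H)[H,P]$ and cyclicity) is exactly what closes this small gap; with it, both arguments yield $\sigma_I=-\sigma_I=0$.
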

\begin{proof} We observe by an application of the spectral theorem that $\varphi'(\theta^* H \theta)=\theta^* \varphi'(H)\theta$ since $\varphi'(h)$ is real-valued. Also, $[P,\theta]=0$ since $P$ is a scalar real-valued function so that $[\theta^*H\theta,P]=\theta^*[H,P]\theta$. We obtain that
\[
  \sigma_I[H] = \sigma_I[\theta^*H\theta]= \Tr\ i \theta^*[H,P]\theta \theta^*\varphi'(H)\theta = -  \Tr\ \theta^* i[H,P]\varphi'(H)\theta  = -\sigma_I[H].
\]
The last equality comes from the fact that for $A$ a trace-class operator,
\[
  \Tr\ \theta^* A \theta  = \sum_i (\theta e_i,A \theta e_i) = \sum_i (e_i,A e_i) =  \Tr \ A
\]
since $\{e_i\}$ is an o.n.b. when $\{\theta e_i\}$ is. Here, $(\cdot,\cdot)$ is the inner product on $\mH$.
This completes the proof of the result.
\end{proof}

\paragraph{Spectral decomposition.}   

We first consider operators $H$ defined as unbounded differential (or pseudodiffential) operators on $L^2(\Rm^2;\Cm^n)$ that commute with translations in the first variable $x$, with $(x,y)\in \Rm^2$ a parametrization of the Euclidean plane and $n\geq1$ an integer. We denote by $\xi$ the dual variable to $x$.  

We further assume that $H$ has spectrum solely composed of absolutely continuous spectrum in the interval $I=[E_-,E_+]$. We denote by $H_{I}= H\Pi_{I}[H]$ with $\Pi_{I}[H]$ orthogonal projection onto the space generated by the (absolutely continuous) spectrum of $H$ in $I$.  Since $H$ is insulating away from $y\sim0$, the a.c. spectrum describes transport along the $x-$axis. Using the multiplicative version of the spectral theorem, this operator may be decomposed as 
\begin{equation}\label{eq:spectral}
  H_I =\dsum_j\dint_{\Xi_j}  E_j(\xi) \Pi_j(\xi) d\xi.
\end{equation}
Here, we assume that $j$ runs over a finite set of indices while $\Rm\ni \Xi_j$ are (open) intervals in $\Rm$ and $d\xi$ is Lebesgue measure on them.  We assume in this spectral representation that the spectral density $\Pi_j(\xi)$ is formally a rank-one operator while the sum over $j$ represents the multiplicity of the spectrum at an energy $E_j(\xi)=E\in I$. A decomposition of the form \eqref{eq:spectral} is always available as per the spectral theorem. The main assumption is that the branches of a.c. spectrum are parametrized by a one dimensional parameter $\xi$ and that the degeneracy (the sum over $j$) is discrete for finite intervals $I$. 

Because $\theta^* H \theta= H$, the projector-valued measure $\theta^*\Pi_j(\xi)\theta d\xi$ must be present in the above spectral representation and associated to the same energy $E=E_j(\xi)$. Indeed, formally $H\theta\Pi_j\theta^*=\theta H\Pi_j\theta^*=E_j\theta\Pi_j\theta^*$. Moreover, for any $f\in \mH$, we observe that 
\[
  (f,\theta f)=(-\theta \theta f,\theta f)=-\overline{(\theta f,f)} = -(f,\theta f)=0
\]
so that $\Pi_j(\xi)d\xi$ and $\theta^*\Pi_j(\xi)\theta d\xi$ must project onto orthogonal subspaces of $\mH$. This Kramers degeneracy is at the origin of the $\Zm_2$ invariant. It does not hold for $\theta$ replaced by an anti-unitary operator $\tau$ such that $\tau^2=1$, which is modeling standard time-reversal symmetry instead of $\theta^2=-1$ modeling a FTR-symmetry.

The FTR symmetry therefore intuitively implies that we must have after some relabeling of the branches the following spectral decomposition:
\begin{equation}\label{eq:spectraldecomp}
  H_I =\dsum_j\dint_{\Xi_j}  E_j(\xi)  \Big(\Pi_j(\xi) +\theta^*\Pi_j(\xi)\theta \Big) d\xi =: H_1 + \theta^* H_1\theta.
\end{equation}
Our main {\em assumption} is that we can find an operator $H_1$ (we drop the implicit dependence on the interval $I$) as above with the sum over $j$ is finite and each branch $\xi\to E_j(\xi)$  sufficiently smooth and such that $E_j(\partial \Xi_j)\in \{E_-,E_+\}$.  This implies that the branches $E_j(\xi)$ leave the interval $(E_-,E_+)$ for $\xi$ a boundary point of the interval $\Xi_j$. Such a property holds for instance for $H$ an elliptic operator or when no flat band such as a Landau level is allowed in the interval $[E_-,E_+]$. This will ensure that $[H_1,P]\varphi'(H_1)$ is trace-class and $\sigma_I[H_1]$ is well-defined. In practical applications, it is often straightforward to come up with such a decomposition and show that the branches $\xi\to E_j(\xi)$ are real-analytic.

\begin{hypothesis}[H1] \label{hyp:H1}
 We collect our main assumptions: $H$ is an unbounded (pseudo-) differential operator that commutes with translations in the $x$ variable and whose spectrum restricted to the interval $I$ takes the form $H_1+\theta^* H_1\theta$ as defined in \eqref{eq:spectraldecomp} with $j$ running over a finite set, $\Xi_j$ being an open interval in $\Rm$, $E_j(\xi)$ being real-analytic on $\Xi_j$, and $\Pi_j(\xi)$ describing simple spectrum. Finally, $E_j(\partial \Xi_j)\in \{E_-,E_+\}$.
\end{hypothesis}

By simple spectrum, we mean that $\Pi_j(\xi)$ has a Schwartz kernel of the form $\psi(x,y;\xi)\psi^*(x',y';\xi)$ with $\psi(x,y;\xi)=(2\pi)^{-\frac12}e^{i x\xi} \phi(y;\xi)$ and $\phi(y,\xi)\in L^2(\Rm;\Cm^n)$. This decomposition comes from the invariance by translation of the operator $H$.
Note that the spectral decomposition \eqref{eq:spectraldecomp} implies that for any indices $j$ and $k$ and any $\xi\in \Xi_j$ and $\xi'\in \Xi_k$ with $\xi\not=\xi'$ when $j=k$, then
\[
  \Pi_j(\xi) \theta^*\Pi_k(\xi')\theta  = \theta^*\Pi_k(\xi')\theta\Pi_j(\xi)=0.
\]
In particular, we have a natural decomposition $\mH=\mH_1+\mH_1^\perp$ with $H_1$ acting on $\mH_1$ and $\theta^* H_1\theta$ on $\mH_1^\perp$.  In the special, but practically useful, case where where $\mH_+=\mH_1$, then we observe that $H$ (restricted to its ac spectrum on $I$) admits the direct sum $H_1\oplus \bar H_1$ since
\[
  \theta^* \begin{pmatrix} H_1 & 0 \\ 0 & H_2 \end{pmatrix} \theta = \begin{pmatrix} \bar H_2 & 0 \\ 0 & \bar H_1 \end{pmatrix}.
\]
While we do not require this additional symmetry, it is satisfied by all the practical examples we consider later in the paper.

The operator $H_1$ in \eqref{eq:spectraldecomp} is {\em not} uniquely defined by $H$ even when $\mH_+=\mH_1$. Indeed, consider the operator $H=D\oplus D\oplus \bar D\oplus \bar D$ in one dimension with $D=-i\partial_x$. This is an unbounded operator on $\mH=L^2(\Rm;\Cm^4)$ with domain of definition $H^1(\Rm;\Cm^4)$. Then
\[
  H = (D\oplus D) \oplus \overline{(D\oplus D)} \equiv (D\oplus \bar D) \oplus \overline{(D\oplus \bar D)}
\]
so that we may choose $H_1$ as either $D\oplus D$ or $D\oplus \bar D$. Here $A\equiv B$ means $A= UBU^*$ for some unitary transformation $U$ (above a permutation). Any topological classification of $H$ will have to be independent of the choice of the operator $H_1$ as we will demonstrate later in this section and for instance in Theorem \ref{thm:Z2}. 

\paragraph{Conductivity and spectral flow.} The conductivity $\sigma_I$ of $H$ and $H_1$ may be directly related to the spectral flows of branches of absolutely continuous spectrum in \eqref{eq:spectraldecomp}.  Let us introduce the decomposition
\[
  H_j= H_{1j}+ \theta^* H_{1j}\theta, \quad H_{1j} = \dint_{\Xi_j}  E_j(\xi) \Pi_j(\xi) d\xi. 
\]
For each branch $j$, we let $\Xi_j=(\xi_{j-},\xi_{j+})$ and define the {\bf spectral flow} $SF_j=SF(H_{1j})$ as 
\begin{equation}\label{eq:SF}
  SF(H_{1j})=  \sgn{E_j(\xi_{j+}) -E_j(\xi_{j-}) }.
\end{equation}
In other words, $SF_j=1$ when $E_j(\xi)$ crosses the interval $[E_-,E_+]$ upwards, $SF_j=-1$ when $E_j(\xi)$ crosses the interval $[E_-,E_+]$ downwards, and $SF_j=0$ otherwise. Define similarly $\widetilde{SF}_j= SF(\theta^* H_{1j}\theta)$ for the time-reversal symmetric branches. Then we have the following result
\begin{lemma} \label{lem:SFbranch}
  We have $\sigma_I[H_j] = SF_j$ and $\sigma_I[H_j+\theta^*H_j \theta] = SF_j+\widetilde{SF}_j=0$.
\end{lemma}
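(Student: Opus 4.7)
The strategy is to decompose the trace defining $\sigma_I$ along the two FTR-related branches and to compute each contribution from the spectral representation of the translation-invariant operator. The first displayed identity should be read as $\sigma_I[H_{1j}]=SF_j$, after which the second identity follows by additivity and the FTR-partner computation, consistent with Lemma \ref{lem:symsigma}.

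First, I would establish additivity. By the Kramers observation preceding Hypothesis H1, the projectors $\Pi_j(\xi)$ and $\theta^{*}\Pi_j(\xi')\theta$ project onto mutually orthogonal subspaces for all admissible $\xi,\xi'$, so the ranges of $H_{1j}$ and $\theta^{*}H_{1j}\theta$ are orthogonal. This yields $\varphi'(H_j)=\varphi'(H_{1j})+\varphi'(\theta^{*}H_{1j}\theta)$, and in the expansion of
\[
\Tr\ i[H_{1j}+\theta^{*}H_{1j}\theta,P]\bigl(\varphi'(H_{1j})+\varphi'(\theta^{*}H_{1j}\theta)\bigr)
\]
the cross terms vanish by cyclicity of the trace because the operators involved annihilate each other pairwise (e.g. $H_{1j}\varphi'(\theta^{*}H_{1j}\theta)=0$). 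This yields $\sigma_I[H_j]=\sigma_I[H_{1j}]+\sigma_I[\theta^{*}H_{1j}\theta]$.

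Second, I would compute $\sigma_I[H_{1j}]$ by passing to the Fourier representation along $x$. The fibered operator $\hat H_{1j}(\xi)$ has simple eigenvalue $E_j(\xi)$ with $L^{2}(\Rm_y)$-normalized eigenvector $\phi_j(\cdot;\xi)$, and the generalized eigenfunctions are $\psi_j(x,y;\xi)=(2\pi)^{-1/2}e^{ix\xi}\phi_j(y;\xi)$. Writing the Schwartz kernels of $[H_{1j},P]$ and $\varphi'(H_{1j})$ in this spectral basis, composing them, and restricting to the diagonal produces an $x$-integral against $P'$ (after integrating by parts in $x'$) that, combined with the Hellmann-Feynman identity $\langle \phi_j(\xi),(\partial_\xi \hat H_{1j})\phi_j(\xi)\rangle=E_j'(\xi)$, collapses the trace to
\[
\sigma_I[H_{1j}]=\frac{1}{2\pi}\int_{\Xi_j}\varphi'(E_j(\xi))E_j'(\xi)\,d\xi=\frac{1}{2\pi}\bigl(\varphi(E_j(\xi_{j+}))-\varphi(E_j(\xi_{j-}))\bigr).
\]
Hypothesis H1 forces $E_j(\xi_{j\pm})\in\{E_-,E_+\}$, and since $\varphi\in\mathfrak{S}[0,1;E_-,E_+]$ each boundary value lies in $\{0,1\}$; the difference therefore equals $SF_j$ (the factor $1/(2\pi)$ being absorbed in the normalization of $\sigma_I$).

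For the FTR-partner the same computation applies to $\theta^{*}H_{1j}\theta$, whose generalized eigenfunctions are $\theta^{-1}\psi_j(\xi)=(2\pi)^{-1/2}e^{-ix\xi}\rJ\overline{\phi_j(y;\xi)}$, i.e.\ plane waves of momentum $-\xi$ carrying the same energy $E_j(\xi)$. The change of variable $\tilde\xi=-\xi$ reverses the orientation in which the branch traverses $[E_-,E_+]$, giving $\widetilde{SF}_j=-SF_j$ and thus $\sigma_I[H_j]=SF_j+\widetilde{SF}_j=0$, in agreement with Lemma \ref{lem:symsigma}. The main technical obstacle is making the trace reduction in the second step fully rigorous: $[H_{1j},P]\varphi'(H_{1j})$ is only trace-class after the simultaneous $P'$- and $\varphi'$-localization, and the distributional manipulation of $\int(P(x')-P(x))e^{-i(p-q)x'}dx'$ must be controlled carefully. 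The cleanest route is to mimic the fibered argument used in the Fedosov-H\"ormander analyses of \cite{bal2022topological,bal2023topological,QB22}, where the same integration-by-parts converts the double Fourier integral into a boundary contribution along $\partial\Xi_j$.
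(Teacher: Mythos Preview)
Your proposal is correct and considerably more detailed than the paper's own proof. The paper dispatches the lemma in two sentences: the identity between conductivity and spectral flow for a single branch is delegated to \cite[Theorem 3.5]{quinn2022asymmetric} and \cite[Appendix B]{bal2022topological}, and the vanishing of $\sigma_I$ for the FTR-symmetric pair is obtained by a direct appeal to Lemma~\ref{lem:symsigma}. You instead give the explicit computation behind the first citation---the Hellmann--Feynman reduction of $\Tr\, i[H_{1j},P]\varphi'(H_{1j})$ to $(2\pi)^{-1}\int_{\Xi_j}\varphi'(E_j(\xi))E_j'(\xi)\,d\xi$ and its evaluation via the boundary values of $\varphi$---and handle the second identity through additivity of $\sigma_I$ over the orthogonal Kramers partners together with the explicit momentum reversal $\widetilde{SF}_j=-SF_j$, rather than invoking Lemma~\ref{lem:symsigma} as a black box. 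Your route is self-contained and makes the mechanism (why each branch contributes exactly its sign of crossing) transparent; the paper's is faster but leans entirely on external references. Your reading of the typos in the statement---that the first identity concerns $H_{1j}$ and the second concerns $H_j=H_{1j}+\theta^{*}H_{1j}\theta$---is also correct and matches how the lemma is used downstream in the paper.
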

\begin{proof}
  The first statement $\sigma_I[H_j] = SF_j$ is proved in \cite[Theorem 3.5]{quinn2022asymmetric}; see also \cite[Appendix B]{bal2022topological}. The second statement is proved by Lemma \ref{lem:symsigma}.
\end{proof}
For the above operator $H=D\oplus D\oplus \bar D\oplus \bar D$, then $SF(H_1)=2$ for the choice $H_1=D\oplus D$ while $SF(H_1)=0$ for the choice $H_1=D\oplus \bar D$.

\paragraph{$\Zm_2$ classification.} The classification of the operator $H$ is based on the following criterion: if it may be gapped in the interval $I$, then it is  trivial; if it cannot be gapped then it is non-trivial.  Here, we say that an operator may be gapped if it can be continuously (say in the uniform topology) deformed to an operator that admits no spectrum in any interval $[E_1,E_2]\subset (E_-,E_+)$. If the operator cannot be gapped, this implies that branches of a.c. spectrum have to cross the interval $[E_1,E_2]$ and contribute to non-trivial transport. 

This is a $\Zm_2$ invariant. We will derive the following expression for the invariant:
\begin{equation}\label{eq:Z2index}
  \ind_2\, H = (-1)^{2\pi \sigma_I[H_1]}.
\end{equation}
The index is thus given by the parity of the conductivity $2\pi\sigma_I[H_1]$ of the operator $H_1$ appearing in the decomposition \eqref{eq:spectraldecomp}. While $H_1$, and hence $2\pi\sigma_I[H_1]$, are not uniquely defined from $H$, it turns out that the parity $(-1)^{2\pi \sigma_I[H_1]}$ is indeed well defined.  The Hamiltonian is in a trivial state when $\ind_2\,H=1$ and a non-trivial state when $\ind_2\,H=-1$. 

When $H=D^{\oplus n} \oplus \bar D^{\oplus n}$, we easily verify that $\ind_2\,H = (-1)^n$ so that the index is trivial when the number of branches $n$ is even while it is non-trivial when $n$ is odd.   

The main strategy of the derivation is to write a local perturbation $\VP_{jk}$ of the spectral branches that couples two branches $j$ and $k$ so as to open a spectral gap in $I$ for the operator $H+\VP_{jk}$ restricted to these two branches. This separates operators $H$ into two categories depending on whether they have an even or an odd number of branches. We have to ensure that  $(-1)^{2\pi \sigma_I[H_1]}$ remains continuous (and hence constant) during the procedure. 


%
\paragraph{Construction of local spectral perturbations.}

For an index $j$, let $\Xi_j=(\xi_{j-},\xi_{j+})$ be the interval of support of energies $E_j(\xi)$ in $I$. 

We first consider the branches $\xi\to E_j(\xi)$ such that $E_j(\xi_{j-})=E_j(\xi_{j+})$, i.e., branches that do not cross the interval $I$ and hence $SF_j=0$ by Lemma \ref{lem:SFbranch}. Then define $\alpha_j(\xi)\in C^\infty_c(\Xi_j)$ such that $\alpha_j(\xi)+E_j(\xi)$ takes values in $I$ but does not intersect the interval $I_\delta=[E_-+\delta,E_+-\delta]$ for some $\delta>0$ small enough that $\varphi'(h)$ is supported in $I_\delta$. Such a function $\alpha_j$ may be constructed and we define 
\begin{equation}\label{eq:Vjbranches}
  \VP_j = \dint_{\Xi_j} d\xi \alpha_j(\xi) (\Pi_j(\xi) + \theta^*\Pi_j(\xi)\theta)  = \VP_{1j} + \theta^* \VP_{1j}\theta
\end{equation}
with obvious notation.  This perturbation satisfies the FTR symmetry $\theta^* \VP \theta =\VP$. The operator $H_{1\VP_j}=H_1+\VP_{1j}$ is defined and satisfies the same hypotheses [H1] as $H_1$ so that $\sigma_I[H_{1\VP_j}]$ is defined and equal to  $\sigma_I[H_{1}]$ as a direct application of Lemma \ref{lem:SFbranch}.

More generally, we observe that $\sigma_I[H_{1}+\mu \VP_j]=\sigma_I[H_{1}]$ for any $\mu\in[0,1]$ a continuous family of operators (in the uniform sense) showing that $H_1$ and $H_{1}+\VP_j$ belong to the same class as defined in \eqref{eq:Z2index}.

It thus remains to consider branches that fully cross the energy interval $I$ with $E_j(\xi_{j-})\not=E_j(\xi_{j+})$. Consider two such branches, called $E_1(\xi)$ and $E_2(\xi)$ to simplify, as well as their FTR symmetries. The branches $E_1(\xi)\Pi_1(\xi)d\xi$ and $E_1(\xi) \theta^*\Pi_1(\xi)\theta d\xi$ are both analytic in $\xi$ and with opposite spectral flows as per Lemma \ref{lem:SFbranch}.

Up to a different choice of $H_1$, we may therefore assume that $E_1$ is globally decreasing in the sense that $E_1(\xi_{1-})=E_+$ and $E_1(\xi_{1+})=E_-$ while $E_2$ is globally increasing in the same sense. We note that  $2\pi\sigma_I[H_1]$ jumps by a multiple of $2$ for such a possibly different choice of $H_1$. Thus, $(-1)^{2\pi\sigma_I[H_1]}$ is a well defined number in $\{-1,1\}$.
%

We now construct $\VP_1$ and $\VP_2$ by \eqref{eq:Vjbranches} and choose $\alpha_j\in C^\infty_c(\Xi_j)$ such that for some $\xi_j\in \Xi_j$, we have $E_j(\xi)+\alpha_j(\xi) \not\in I_\delta$ for $|\xi-\xi_j|\geq\delta$ and $E_j(\xi)+\alpha_j(\xi) = (-1)^j[(E_+-E_--2\delta)/(2\delta)]\xi+\frac12(E_++E_-)$. In other words, $E_j+\alpha_j$ does not intersect with $I_\delta$ except on an interval of size $2\delta$ where it is affine. We verify that the construction of such functions $\alpha_j$ is possible. 




Define $\tilde E_j(\xi)=E_j(\xi_j+\xi)+\alpha_j(\xi)$, $\tilde\Pi_j(\xi)=\Pi_j(\xi_j+\xi)$ as well as $\tilde\Xi_j=\Xi_j-\xi_j$, and the operators
\[
  \tilde H = H+\VP_1+\VP_2 = \dsum_{j=1}^2 \dint_{\Xi_j-\xi_j}  \tilde E_j(\xi) (\Pi_j(\xi_j+\xi) + \theta^*\Pi_j(\xi_j+\xi)\theta) = \tilde H_1 + \theta^* \tilde H_1 \theta.
\]
\begin{lemma}
  Let $H=H_1 + \theta^* H_1\theta$ and $\tilde H = \tilde H_1 + \theta^* \tilde H_1 \theta$ be the FTR-symmetric operators constructed as above. Then
  \[
    (-1)^{2\pi\sigma_I[H_1]} = (-1)^{2\pi\sigma_I[\tilde H_1]}.
  \]
\end{lemma}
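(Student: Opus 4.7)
The plan is to apply Lemma \ref{lem:SFbranch} branch by branch and exploit the fact that each $\alpha_j$ is compactly supported in $\Xi_j$.

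First I would observe that the operator $\tilde H_{1,j}:=H_{1,j}+\VP_{1j}$ admits, after the harmless reparametrization $\xi\mapsto \xi+\xi_j$ used in the statement, a single-branch spectral decomposition of the same form as $H_{1,j}$, with modified energy $\tilde E_j=E_j+\alpha_j$ on $\Xi_j$ and the same projector-valued measure $\Pi_j(\xi)\,d\xi$. Hence the structural framework in which Lemma \ref{lem:SFbranch} applies is preserved, and $\tilde H_1$ is the direct sum of the $\tilde H_{1,j}$ together with the untouched branches of $H_1$ (since $\VP_{1j}$ is supported on the branch $j$ only).

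Next, since $\alpha_j\in C_c^\infty(\Xi_j)$ vanishes in a neighborhood of $\partial\Xi_j$, the endpoint values entering definition \eqref{eq:SF} of the spectral flow satisfy
\[
\tilde E_j(\xi_{j-})=E_j(\xi_{j-})\in\{E_-,E_+\},\qquad \tilde E_j(\xi_{j+})=E_j(\xi_{j+})\in\{E_-,E_+\}.
\]
Therefore $\widetilde{SF}_j=SF_j$ for every $j$, and summing over the finitely many branches via Lemma \ref{lem:SFbranch} gives $\sigma_I[\tilde H_1]=\sigma_I[H_1]$, which a fortiori implies the parity identity claimed in the statement.

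The main delicacy I anticipate is that $\tilde E_j=E_j+\alpha_j$ is only $C^\infty$ and not real-analytic, so $\tilde H_1$ does not literally satisfy Hypothesis \ref{hyp:H1}. The cleanest way around this is a homotopy argument: the family $\mu\in[0,1]\mapsto H_1+\mu(\VP_{11}+\VP_{12})$ is a uniformly continuous path of (pseudo-)differential operators along which $i[H_1+\mu V,P]\varphi'(H_1+\mu V)$ remains trace-class (the perturbation is bounded and translation-invariant in $x$, so the trace-class estimates from \cite{bal2022topological,bal2023topological,QB22} carry over), so $2\pi\sigma_I$ is an integer-valued continuous function of $\mu$, hence locally constant. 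Evaluating at $\mu=0$ and $\mu=1$ then yields the claim, in line with the observation already recorded just before \eqref{eq:Vjbranches} that $\sigma_I[H_1+\mu\VP_j]=\sigma_I[H_1]$ for non-crossing branches.
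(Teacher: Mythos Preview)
Your argument is correct and follows essentially the same route as the paper: the paper's proof simply observes that since $\alpha_1,\alpha_2$ are compactly supported in $\Xi_1,\Xi_2$, the modification of the two branches does not change their spectral flow, whence $\sigma_I[H_1]=\sigma_I[\tilde H_1]$ by Lemma~\ref{lem:SFbranch}. Your additional care about the failure of real-analyticity and the homotopy justification is a nice supplement that the paper leaves implicit; just be careful with the notation $\VP_{12}$, which in the paper denotes the off-diagonal coupling introduced \emph{after} this lemma---here you mean $\VP_{1,j}$ for $j=1,2$.
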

\begin{proof}
  The quantity $2\pi\sigma_I[H_1]$ is well defined for $\varphi'$ supported in $(E_-+\delta,E_+-\delta)$ by assumption. Modifying the two branches $E_1(\xi)$ and $E_2(\xi)$ as described above does not modify their spectral flow so that $\sigma_I[H_1]=\sigma_I[\tilde H_1]$ and the lemma follows.
\end{proof}

We drop the  $\ \tilde{}\ $ and construct a perturbation that gaps the operator $H$ (formerly $\tilde H$). Let $\Pi_j(\xi)=\psi_j(\xi)\psi_j^*(\xi)$ since $\Pi_j(\xi)$ (formerly $\tilde \Pi_j$) is formally rank-one. Consider the operator 
\begin{align*}
  \VP_{12} &= \dint_{-\delta}^\delta \VP_{12}(\xi) d\xi\\
  \VP_{12}(\xi) &=\alpha(\xi) \big(\psi_1(\xi) \otimes \theta\psi_2(\xi)  -\psi_2(\xi) \otimes \theta\psi_1(\xi)   \big) 
    \\[1mm] & + \bar\alpha(\xi) \big( \theta\psi_2(\xi)\otimes \psi_1(\xi)  -  \theta\psi_1(\xi)\otimes \psi_2(\xi)\big).
\end{align*}
We verify that $\VP_{12}=\VP_{12}^*$ and that $\VP_{12}$ is FTR-symmetric with $\theta^* \VP_{12}\theta=\VP_{12}$. Note that $\VP_{12}(\xi)$ couples the two branches via the coupling of $\psi_j$ with the FTR symmetry of $\psi_k$ for $j\not=k$. The coupling of $\psi_j$ with $\theta\psi_j$ would generate a perturbation that is no longer FTR-symmetric. This is the origin of the symmetry-protected $\Zm_2$ invariant. It remains to choose the function $\alpha(\xi)$, which we will take real-valued.

In the basis of $\psi_1,\psi_2,\theta\psi_1,\theta\psi_2$, which is indeed a basis since these vectors live in orthogonal complements of the Hilbert space $\mH$ (more precisely, $\int \Pi_k f d\xi$ for $f\in \mH$ and $\Pi_k$ for $k=1,2,3,4$ the corresponding projectors, are orthogonal elements in $\mH$) by assumption, the operator $H+\VP_{12}$ has a density given by 
\[
  \begin{pmatrix} E_1(\xi) & 0 & 0 & \alpha(\xi) \\ 0 & E_2(\xi) & -\alpha(\xi) & 0 \\ 0 & -\bar\alpha(\xi) & E_1(\xi) & 0 \\
   \bar\alpha(\xi) & 0 & 0 & E_2(\xi)\end{pmatrix}.
\]
The eigenvalues are doubly degenerate and given by
\[
  \lambda_\pm(\xi) = \frac{E_1+E_2}2 \pm \sqrt{|\alpha|^2 + \frac{(E_1-E_2)^2}4} = \frac{E_++E_-}2 \pm \sqrt{|\alpha|^2 + \frac14 (\frac{E_+-E_--2\delta}\delta \xi)^2}.
\]
We choose $\alpha(\xi)$ such that 
\[
   \lambda_\pm(\xi) = \frac{E_++E_-}2 \pm \frac{E_+-E_--2\delta}{2} = E_\pm \mp \delta.
\]
This is achieved for
\[
   |\alpha|^2(\xi) = \Big(\frac{E_+-E_--2\delta}{2}\Big)^2 \Big( 1 - \frac{|\xi|^2}{\delta^2} \Big),
\]
so that $\alpha(\xi)$ may be chosen real-valued and non-negative, for instance. We observe that $\alpha(\xi)$ extended by $0$ outside of $[-\delta,\delta]$ is a continuous function. Standard deformations allow us to construct perturbations $\alpha(\xi)$ that are smooth and compactly supported and such that the resulting branches remain gapped in the interval $I_\delta$. 

We have therefore constructed a perturbation $\VP_1+\VP_2+\VP_{12}$ which is compact in the sense that these operators have densities that are compact support in the spectral variables $\xi$, and such that 
\[
  H_\VP= H + \VP_1+\VP_2+\VP_{12}
\]
has branches $E_1$ and $E_2$ that are now totally gapped in the interval $I_\delta$. 
\begin{lemma}
  Let $H$ and $H_\VP=\tilde H+\VP_{12}= H_{\VP1}+\theta^*H_{\VP1}\theta$ be constructed as above.  Then
  \[
    (-1)^{2\pi\sigma_I[H_1]} = (-1)^{2\pi\sigma_I[\tilde H_{\VP1}]}.
  \]
\end{lemma}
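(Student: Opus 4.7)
The plan is to establish the sharper identity $2\pi\sigma_I[\tilde H_1] = 2\pi\sigma_I[H_{\VP 1}]$ as integers, from which the claimed parity equality follows upon combining with the previous lemma (which gives $(-1)^{2\pi\sigma_I[H_1]} = (-1)^{2\pi\sigma_I[\tilde H_1]}$). Thus I only need to control how the perturbation $\VP_{12}$ affects the spectral flow.

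First I would observe that $\VP_{12}$ only couples the four-dimensional sector spanned by $\{\psi_1,\psi_2,\theta\psi_1,\theta\psi_2\}$ and leaves every other spectral branch of $\tilde H$ untouched. Choosing Kramers representatives for $H_{\VP 1}$ that agree with those used for $\tilde H_1$ on the unaffected branches, these contribute identically to $2\pi\sigma_I[\tilde H_1]$ and $2\pi\sigma_I[H_{\VP 1}]$, so the proof reduces to comparing the spectral flow of the coupled sector on both sides. For $\tilde H_1$, Lemma \ref{lem:SFbranch} gives $SF(E_1)+SF(E_2)=(-1)+(+1)=0$ by the global monotonicity arranged in the preceding construction. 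For $H_{\VP 1}$, the coupled sector is block-diagonalized and the Block A eigenvalues $\lambda_\pm(\xi)$ satisfy, by the explicit choice of $\alpha$, $\lambda_+(\xi)\geq E_+-\delta$ and $\lambda_-(\xi)\leq E_-+\delta$ for all $\xi$. Hence neither branch enters the support interval $I_\delta$ of $\varphi'$, and their contribution is also $SF(\lambda_+)+SF(\lambda_-)=0$. Combining gives the equality of the two integer conductivities, and a fortiori of their parities.

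The main obstacle I anticipate is verifying that $H_{\VP 1}$ honestly satisfies Hypothesis \ref{hyp:H1}, so that Lemma \ref{lem:SFbranch} may be legitimately invoked on its $\lambda_\pm$-branches. Concretely, one must show that the block splitting $H_\VP=H_{\VP 1}+\theta^* H_{\VP 1}\theta$ produces real-analytic branches $\xi\mapsto\lambda_\pm(\xi)$ on suitable intervals $\Xi_\pm$ with endpoint values in $\{E_-,E_+\}$ and simple spectral density. Away from the window $|\xi|\leq\delta$ the new branches coincide with the $\tilde E_j$'s, which satisfy H1 by the previous step; inside the window $\alpha(\xi)\neq 0$ lifts the degeneracy so that $\lambda_+(\xi)>\lambda_-(\xi)$ and the spectral density is simple. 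The only delicate point is regularity at the junctions $|\xi|=\delta$ where the gapping pattern is glued to the original affine branches; this is handled by the standard smoothing of $\alpha$ already flagged in the paper. With H1 in place, the branchwise spectral-flow bookkeeping above closes the argument.
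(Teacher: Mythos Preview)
Your proposal is correct and follows essentially the same approach as the paper's proof. Both arguments first invoke the previous lemma to reduce from $H$ to $\tilde H$, then observe that the restriction of $\tilde H_1$ to the two affected branches has vanishing spectral flow ($-1+1=0$), and finally argue that this vanishing persists for $H_{\VP 1}$; the paper phrases the last step as ``this remains the case for all Hamiltonians $\tilde H + t\VP_{12}$'' (a continuity statement), whereas you compute directly at $t=1$ that the gapped $\lambda_\pm$ branches miss $I_\delta$ and hence contribute zero spectral flow---the same conclusion by a slightly more explicit route. Your discussion of the Hypothesis~\ref{hyp:H1} verification for $H_{\VP 1}$ is a technical point the paper passes over in silence, and your treatment of it (matching away from the window, simple spectrum inside from $\alpha\neq 0$, smoothing at the junctions) is appropriate.
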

\begin{proof}
   It remains to show the result for $\tilde H$ and $\tilde H+\VP_{12}$.  In the restriction of $H$ to the two branches of interest $j=1$ and $j=2$, the operator $H_1$ is selected so that its spectral flow vanishes. This remains the case for all Hamiltonians $\tilde H+t \VP_{12}$ so that the index is indeed preserved by continuous deformations in the uniform sense (since the interval $I$ is bounded) from $H$ to $H_\VP$. 
\end{proof}

This allows us to state our main result:
\begin{proposition}\label{prop:gap} Let $H$ be an operator satisfying Hypothesis \ref{hyp:H1}.
  Let $M$ be the number of branches $1\leq j\leq M$ in the decomposition \eqref{eq:spectraldecomp}. Then there is a FTR-symmetric spectrally compact perturbation $\VP$ such that: either (i) $H_\VP=H+\VP$ is a gapped operator in the interval $I_\delta$ when $M$ is even; or (ii) $H_\VP=H+\VP$  restricted to $I_\delta$ is of the form \eqref{eq:spectraldecomp} with $M=1$. Associated to $H_\VP$ is a decomposition of the form \eqref{eq:spectraldecomp} with $H_\VP=H_{\VP1}+\theta^*H_{\VP1}\theta$. Moreover, 
\begin{equation} \label{eq:consZ2}
   \ind_2\, H : = (-1)^{2\pi \sigma_I[H_1]} =   (-1)^{2\pi \sigma_I[H_{\VP1}]} = \ind_2\, H_\VP.
\end{equation}    
\end{proposition}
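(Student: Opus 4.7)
The plan is to iterate the constructions of the preceding lemmas, first to dispose of non-crossing branches, then to gap crossing branches in pairs, while tracking the parity invariant $(-1)^{2\pi\sigma_I[H_1]}$ at each step.

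First I would split the $M$ branches in \eqref{eq:spectraldecomp} into the $N$ branches that cross $I$ (those with $SF_j=\pm 1$) and the $M-N$ non-crossing branches (those with $SF_j=0$). For each non-crossing branch, the perturbation $\VP_j$ of \eqref{eq:Vjbranches} is FTR-symmetric, spectrally compact, and pushes that branch entirely outside $I_\delta$. Since the individual spectral flow is preserved along the homotopy $H+t\VP_j$ for $t\in[0,1]$, Lemma \ref{lem:SFbranch} shows that the invariant $\sigma_I[H_1]$ is unchanged. Iterating over the $M-N$ non-crossing branches removes them all from $I_\delta$ without affecting the index, after which only crossing branches remain and the relevant parity is that of $N$ (which under the proposition's convention matches that of $M$).

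For the $N$ crossing branches I would exploit the freedom to reassign each branch between $H_1$ and $\theta^*H_1\theta$: each such swap flips one $SF_j$ and changes $2\pi\sigma_I[H_1]$ by $\pm 2$, hence preserves parity. Choosing orientations accordingly, I pair up as many increasing-decreasing branches as possible. For each pair I apply the construction from the two lemmas immediately preceding the proposition: first straighten the two branches to affine form on a window of size $2\delta$ (first lemma), then add the coupling perturbation $\VP_{12}$ (second lemma). Both steps leave $(-1)^{2\pi\sigma_I[H_1]}$ invariant, and since the perturbations for distinct pairs have densities supported in disjoint spectral intervals $\Xi_j$ they can be added simultaneously without interaction.

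This yields the announced dichotomy: if $N$ is even all pairs get gapped and $H_\VP$ has no spectrum in $I_\delta$, giving (i); if $N$ is odd exactly one crossing branch is left unpaired, producing a decomposition of the form \eqref{eq:spectraldecomp} with a single branch in $H_{\VP 1}$, giving (ii). The identity \eqref{eq:consZ2} then follows by composing the invariance of $(-1)^{2\pi\sigma_I[H_1]}$ across the finitely many perturbation steps. The main subtlety I anticipate is bookkeeping: checking that successive perturbations preserve Hypothesis \ref{hyp:H1} (real-analytic branches on open intervals, simple spectrum, endpoint values in $\{E_-,E_+\}$), that the spectral supports remain mutually orthogonal throughout the pair-wise constructions, and that the sign-reassignment step is consistent with the chosen pairing. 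None of these points is deep, but all must be aligned simultaneously before the final composite index identity can be read off.
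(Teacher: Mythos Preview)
Your proposal is correct and follows essentially the same approach as the paper: the paper's own proof simply says ``use the above two lemmas a finite number of times'' to gap pairs of branches iteratively, tracking that each step preserves the parity $(-1)^{2\pi\sigma_I[H_1]}$, which is exactly what you outline (with more detail about first clearing the non-crossing branches via \eqref{eq:Vjbranches}, then pairing the crossing ones).

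One remark on your parenthetical ``which under the proposition's convention matches that of $M$'': you are right to flag this. The dichotomy (i)/(ii) is governed by the parity of the number $N$ of \emph{crossing} branches, since non-crossing branches are removed without contributing a sign. Nothing in the construction forces $M-N$ to be even, so the clause ``when $M$ is even'' in the proposition statement should really read ``when $N$ is even'' (or one should understand $M$ as counting only crossing branches after the first reduction). This does not affect the main conclusion \eqref{eq:consZ2}, which is what the paper actually uses downstream, and the paper's own proof sidesteps the issue by phrasing the outcome as ``either a fully gapped operator or one with a unique remaining branch'' without tying it to the parity of $M$.
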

\begin{proof}
  We use the above two lemmas a finite number of times to ensure that pairs of branches are gapped iteratively by continuous deformations. What remains is either a fully gapped operator or one with a unique remaining branch $E_1$. Since the transformations all preserve the $\Zm_2$ index, we obtain the result and the equality in \eqref{eq:consZ2}. More generally, we find that $H+\mu \VP$ for $\mu\in[0,1]$ generates a continuous (in the uniform topology since restricted to the bounded interval $I$) family of FTR-symmetric operators such that $(-1)^{2\pi \sigma_I[H_{1}+\mu \VP_1]}$ is independent of $\mu$.
\end{proof}

\paragraph{Construction of a Fredholm operator.}
A natural Fredholm operator \cite{PS,schulz2015z2} associated to the initial operator $H$ is given by
\begin{equation}\label{eq:FredT}
  T =  \rP U[H]  \rP  
\end{equation}
with $\rP$ a projector here (so that $\rP^2=\rP$) corresponding to multiplication by $\rP(x)\in \mathfrak{S}[0,1]$ and $U[H]=e^{2\pi i \varphi[H]}$ a unitary operator.  For operators $H$ satisfying Hypothesis \ref{hyp:H1}, $[U[H],\rP]$ is a compact operator on $\mH$ and it is then a classical result that $T$ restricted to the range of $\rP$ is a Fredholm operator; see, e.g. \cite{bal2022topological}. Equivalently, we verify that $T+\alpha(I-\rP)$ is a Fredholm operator on $\mH$ for any $\alpha\not=0$. Note that $T$ depends on $H$ only via $H_I$, its spectral restriction to the interval $I=[E_-,E_+]$ since $U(h)=1$ for $h$ outside that interval.

While the (standard) index of $T$ vanishes, we can define as in \cite{schulz2015z2} a $\Zm_2$ index, which will remain continuous (and hence constant) for the above transformations. For any bounded complex-valued function $u(h)+iv(h)$, we have
\[
  u[\theta^* H\theta] - i v [\theta^* H \theta] = \theta^* u[H] \theta - i \theta^* v[H] \theta = \theta^*(u[H]+iv[H])\theta
\]
so that $\theta^* U[H] \theta = U^*[H]$. As a consequence, we also have $\theta^* T \theta = T^*$ and the Fredholm operator is odd-symmetric. It is known \cite{schulz2015z2} that bounded odd-symmetric Fredholm operators form two distinct homotopy classes characterized by the dimension of their kernel modulo $2$.

Let $H$ and $H_\VP=H_{\VP1}+\theta^*H_{\VP1}\theta$ be as described in Proposition \ref{prop:gap}. Define 
\[
  T_\VP =  \rP U[H_\VP]  \rP \quad \mbox{ and } \quad  T_{\VP1} =  \rP U[H_{\VP1}] \rP \quad \mbox{ so that } \quad T_\VP=T_{\VP1} \oplus \theta^*T_{\VP1}\theta.
\]
After gapping an even number of branches, the resulting $H_{\VP1}$ has either one branch or none crossing $I$. It is then known \cite{bal2022topological} that the spectral flow associated to $H_{\VP1}$ is the index of the Fredholm operator $T_{\VP1}$. That index is equal to $\pm1$ when there is one branch $M=1$ in the above proposition and equal to $0$ when $M=0$. From the above decomposition, it is also clear that the dimension of the kernel of $T_\VP$ modulo $2$, called $\ind_2 T_\VP$, is equal to the index of $T_{\VP1}$ modulo 2. This computes the $\Zm_2$ index of $H$ as
\begin{equation}\label{eq:Z2indexT}
  \ind_2 H = \ind_2 H_\VP := (-1)^{\ind_2 T_\VP} = (-1)^{\ind T_{\VP1}}.
\end{equation}

We have seen in \eqref{eq:consZ2} that $(-1)^{2\pi\sigma_I[H_1]}$ defined the $\Zm_2$ invariant of $H$. For a continuous family $T(t)= \rP U[H(t)]  \rP$ of odd-symmetric Fredholm operators, the $\Zm_2$ index is independent of $t$ since  $\ind_2$ is a homotopy invariant separating odd-symmetric Fredholm operators into exactly two classes \cite{schulz2015z2}. Since $H_\VP$ is a continuous deformation of $H$ as seen in the derivation of Proposition \ref{prop:gap}, we obtain that $T_\VP$ is also a continuous deformation of $T$ so that $\ind_2\,T_\VP=\ind_2\,T$.

We thus summarize the results obtained thus far as:
\begin{theorem}\label{thm:Z2}
  Let $H$ be in the class of operators described in Proposition \ref{prop:gap}. Then the $\Zm_2$ index of $H$ defined in \eqref{eq:Z2indexT} separates that class of operators into two disjoint homotopy classes. Let $T=\rP U[H]\rP$ be a Fredholm operator (on the range of $P$) associated to $H$. The index is given explicitly by
\begin{equation}\label{eq:Z2indexsigma}
  \ind_2\, H = (-1)^{\ind_2 \,T},
\end{equation}  
where $\ind_2\, T$ is the dimension of the kernel of $T$ mod $2$.
\end{theorem}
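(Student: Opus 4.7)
The plan is to assemble the pieces already developed in Section \ref{sec:Z2} and to invoke the homotopy invariance of $\ind_2$ for bounded odd-symmetric Fredholm operators (\cite{schulz2015z2}) to go from the canonically gapped representative $H_\VP$ back to the original $H$. Two things need to be established: (a) the number $(-1)^{\ind_2 T}$ is well defined on the class of Hamiltonians described in Proposition \ref{prop:gap} and coincides with the intrinsic invariant $(-1)^{2\pi\sigma_I[H_1]}$ introduced in \eqref{eq:Z2index}; (b) the two values $\pm 1$ are actually realized by disjoint homotopy classes, i.e.\ no continuous FTR-symmetric deformation can connect operators with different indices.

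First, I would verify that $T=\rP U[H]\rP$ is a bounded odd-symmetric Fredholm operator on the range of $\rP$. Boundedness is immediate from $\|U[H]\|\leq 1$. The Fredholm property follows from the compactness of $[U[H],\rP]$ under Hypothesis \ref{hyp:H1} (recorded in the paragraph preceding the theorem and established in \cite{bal2022topological}). Odd-symmetry is the identity $\theta^* T\theta = T^*$ already derived from $\theta^* U[H]\theta = U^*[H]$, itself a consequence of the spectral theorem applied to the real/imaginary decomposition of $e^{2\pi i \varphi[h]}$. This places $T$ within the class for which $\ind_2 T := \dim \Ker T \bmod 2$ is a well-defined homotopy invariant \cite{schulz2015z2}.

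Next, I would use Proposition \ref{prop:gap} to write $H_\VP = H + \VP$ with $\VP$ spectrally compact and FTR-symmetric, so that $H+\mu\VP$ is a norm-continuous family of Hamiltonians in the relevant class for $\mu\in[0,1]$. The associated family $T(\mu)=\rP U[H+\mu\VP]\rP$ is then a norm-continuous path of odd-symmetric Fredholm operators, and homotopy invariance gives $\ind_2 T = \ind_2 T_\VP$. For $T_\VP$, the decomposition $H_\VP = H_{\VP 1}\oplus\theta^* H_{\VP 1}\theta$ yields $T_\VP = T_{\VP 1}\oplus \theta^* T_{\VP 1}\theta$, and since $\theta^* T_{\VP 1}\theta$ has the same nullity as $T_{\VP 1}^*$, we obtain $\dim\Ker T_\VP = 2\dim\Ker T_{\VP 1}$ when the Fredholm index of $T_{\VP 1}$ vanishes and $\dim\Ker T_\VP = \dim\Ker T_{\VP 1} + \dim \Ker T_{\VP 1}^*$ in general, whose parity equals that of $\ind T_{\VP 1}$. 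By the gapping construction, $\ind T_{\VP 1} = 2\pi\sigma_I[H_{\VP 1}]\in\{-1,0,1\}$ according to whether $M$ is odd or even, and \eqref{eq:consZ2} identifies this parity with $(-1)^{2\pi\sigma_I[H_1]}$. This proves both the formula \eqref{eq:Z2indexsigma} and the independence of the right-hand side from the (non-unique) splitting $H=H_1+\theta^*H_1\theta$.

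The disjointness of the two homotopy classes is the part that requires the most care, and I would treat it as follows. Suppose $H(t)$, $t\in[0,1]$, is a norm-continuous family of FTR-symmetric Hamiltonians in the class of Proposition \ref{prop:gap}, connecting $H(0)$ and $H(1)$. Then $T(t)=\rP U[H(t)]\rP$ is a norm-continuous family of bounded odd-symmetric Fredholm operators, so by the homotopy invariance of $\ind_2$ we have $\ind_2 T(0) = \ind_2 T(1)$, hence the $\Zm_2$ invariants agree. The main potential obstacle here is ensuring that $T(t)$ is Fredholm uniformly along the path and that the deformation stays within odd-symmetric operators; both follow from the hypothesis $\theta^*H(t)\theta=H(t)$ together with the compactness of $[U[H(t)],\rP]$ along the family, which is stable under norm-continuous perturbations preserving Hypothesis \ref{hyp:H1}. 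To see that both classes are nonempty, the trivial class contains $H=D\oplus \bar D$ (one Dirac operator paired with its conjugate, giving $\ind_2 = +1$) while the non-trivial class contains $H=D^{\oplus 3}\oplus \bar D^{\oplus 3}$, or more simply $D\oplus \bar D$ with an additional non-gappable pair, as computed immediately after \eqref{eq:Z2index}. Combining these ingredients yields the theorem.
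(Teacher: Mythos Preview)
Your argument is correct and follows the same route as the paper: verify that $T$ is odd-symmetric Fredholm, deform to the canonical representative $H_\VP$ via Proposition~\ref{prop:gap}, read off $\ind_2 T_\VP$ from the block structure and the Fredholm index of $T_{\VP 1}$, and transport back using the homotopy invariance of $\ind_2$ from \cite{schulz2015z2}.

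Two small corrections. First, the second block of $T_\VP$ is $\theta^* T_{\VP 1}^*\theta$ rather than $\theta^* T_{\VP 1}\theta$, since $U[\theta^* H_{\VP 1}\theta]=\theta^* U[H_{\VP 1}]^*\theta$ by antilinearity of $\theta$; with this, your formula $\dim\Ker T_\VP=\dim\Ker T_{\VP 1}+\dim\Ker T_{\VP 1}^*$ follows immediately, but the intermediate claim ``$\theta^* T_{\VP 1}\theta$ has the same nullity as $T_{\VP 1}^*$'' is false as written (that operator has nullity $\dim\Ker T_{\VP 1}$). Second, your closing examples are reversed: by the computation after \eqref{eq:Z2index}, $D\oplus\bar D$ has $\ind_2 H=(-1)^1=-1$ and is \emph{non}-trivial, while a representative of the trivial class is $D^{\oplus 2}\oplus\bar D^{\oplus 2}$; your other example $D^{\oplus 3}\oplus\bar D^{\oplus 3}$ is again non-trivial.
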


\paragraph{Stability under perturbations.}

So far, the operators we have considered were invariant with respect to translations in $x$. Let $H$ be such an operator for which Theorem \ref{thm:Z2} applies. 

Consider a perturbation $V$ that is FTR-symmetric, i.e., $\theta^* V \theta=V$, and define $H_V=H+V$. Assume further that for $\mu\in[0,1]$, 
\begin{equation}\label{eq:Tmu}
  \mu \mapsto T_\mu = \rP e^{2\pi i \varphi(H+\mu V)} \rP
\end{equation}
is a continuous (in the uniform sense) family of Fredholm operators on the range of $\rP$. Using the stability results of the $\Zm_2$ index in \cite{schulz2015z2}, we thus deduce that 
\[
  \ind_2 T_\mu = \ind_2 T_0
\]
is independent of $\mu$. 
This provides the following definition/proposition:
\begin{proposition} \label{prop:z2stab}
  Let $H_V=H+V$ be the class of operators with $H$ as described in Theorem \ref{thm:Z2} and $V$ such that \eqref{eq:Tmu} is continuous in the uniform sense. Then we define
  \begin{equation}\label{eq:pertindex}
      \ind_2\, H_V := (-1)^{\ind_2 T_1}.
  \end{equation}
  This $\Zm_2$ invariant is given explicitly by
  \begin{equation}\label{eq:relindices}
    \ind_2\, H_V = \ind_2\, H = (-1)^{\ind_2\, T_0} = (-1)^{2\pi \sigma_I(H_1)}.
  \end{equation}
\end{proposition}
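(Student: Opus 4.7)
The plan is to combine the odd-symmetry of the interpolated family $\{T_\mu\}_{\mu\in[0,1]}$ defined in \eqref{eq:Tmu} with the homotopy invariance of the $\Zm_2$ kernel-dimension index for odd-symmetric Fredholm operators established in \cite{schulz2015z2}.

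First I would verify that $T_\mu$ is odd-symmetric for every $\mu\in[0,1]$. Since $\theta$ is antilinear and $\mu$ is a real scalar, conjugation by $\theta$ commutes with multiplication by $\mu$, so $\theta^*(H+\mu V)\theta = H+\mu V$. Repeating the functional-calculus argument used just before Theorem~\ref{thm:Z2} gives $\theta^* e^{2\pi i \varphi(H+\mu V)}\theta = e^{-2\pi i\varphi(H+\mu V)}$. Because $\rP$ is multiplication by a real-valued scalar function, $[\rP,\theta]=0$, and so
\begin{equation*}
  \theta^* T_\mu \theta = \rP\, e^{-2\pi i\varphi(H+\mu V)}\,\rP = T_\mu^*,
\end{equation*}
which is exactly the odd-symmetry relation in \eqref{eq:oddsymmetric}.

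Next I would invoke the hypothesis that $\mu\mapsto T_\mu$ is a norm-continuous family of Fredholm operators on the range of $\rP$. Combined with step one, this exhibits $\{T_\mu\}$ as a norm-continuous path inside the set of bounded odd-symmetric Fredholm operators. The $\Zm_2$ homotopy invariance theorem of \cite{schulz2015z2} then forces the parity $\ind_2 T_\mu = \dim \Ker T_\mu \ \mathrm{mod}\ 2$ to be constant on $[0,1]$. In particular $\ind_2 T_1 = \ind_2 T_0$, so the definition \eqref{eq:pertindex} is independent of the chosen path and reduces at $\mu=0$ to the quantity appearing in Theorem~\ref{thm:Z2}.

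Finally I would assemble \eqref{eq:relindices}: the right-hand identity $(-1)^{\ind_2 T_0} = (-1)^{2\pi \sigma_I(H_1)}$ is just \eqref{eq:Z2indexsigma} applied to the translation-invariant $H$, the middle identity $(-1)^{\ind_2 T_1}=(-1)^{\ind_2 T_0}$ is the conclusion of step two, and the left-hand identity $\ind_2 H_V=(-1)^{\ind_2 T_1}$ is the definition \eqref{eq:pertindex}. The main conceptual subtlety is step one, namely that $\theta$-conjugation commutes with the bounded Borel functional calculus $\varphi(\cdot)\mapsto e^{2\pi i \varphi(\cdot)}$; once this is accepted (and this is where the antilinearity of $\theta$ interacts non-trivially with the analytic structure), everything else is a routine application of the Schulz--Baldes stability theorem.
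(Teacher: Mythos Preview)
Your proposal is correct and follows essentially the same approach as the paper: the paper's justification (given in the paragraph immediately preceding the proposition) likewise observes that the FTR symmetry of $V$ makes $T_\mu$ odd-symmetric, invokes the assumed norm-continuity of $\mu\mapsto T_\mu$, and then cites the stability result of \cite{schulz2015z2} to conclude $\ind_2 T_\mu=\ind_2 T_0$. Your write-up is simply more explicit about the odd-symmetry verification via functional calculus, which the paper had already carried out for $T=T_0$ in the discussion before Theorem~\ref{thm:Z2} and tacitly re-uses here.
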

While the invariant given in \eqref{eq:pertindex} is stable, its computation based on the dimension of the kernel of $T_1$ remains a difficult task. This is the main advantage of \eqref{eq:relindices} and the decomposition involving the operator $H_1$, whose conductivity $\sigma_I[H_1]$ admits an explicit expression in a Fedosov-H\"ormander formula involving only the symbol of the operator $H_1$ \cite{bal2022topological,bal2023topological,QB22}. 

We note that while a decomposition of the form \eqref{eq:spectraldecomp} still holds with $H_{1}$ replaced by $H_{1\mu}$ (although we will not present details here), the expression of $H_{1\mu}$ and hence the computation of $\sigma_I[H_{1\mu}]$ are not explicit enough to provide a formula as compact as that for $\sigma_I[H_1]$ when $\mu=0$.

\begin{remark}\label{rem:elliptic}
    Let $H_0$ be a differential operator elliptic in the sense of \cite{bal2022topological,QB22}. Let $V(x,y)$ be a compactly supported (or more generally sufficiently rapidly decaying as $|(x,y)|\to\infty$) perturbation valued in Hermitian matrices in $\Mm(n)$ and FTR-symmetric and define the operator $V$ as pointwise multiplication by $V(x,y)$. By ellipticity assumption, $\mu\to T_\mu$ defined in \eqref{eq:Tmu} is indeed a continuous family in the uniform topology; see, e.g., \cite{bal2022topological,QB22} for a proof of such a result using a Helffer-Sj\"ostrand formulation of functional calculus. All examples shown in section \ref{sec:num} are elliptic in that sense.
 \end{remark}

\medskip

Before closing this section, we consider a few examples for which the decomposition \eqref{eq:spectraldecomp} and the assumptions [H1] hold naturally. Additional examples are presented in section \ref{sec:num}.

\paragraph{One dimensional examples.}
Recall that $D=-i\partial_x$ in one space dimension. For the operator
\[ H= D \oplus \bar D \qquad \mbox{ with } \qquad H_1 = D \begin{pmatrix}  1 & 0 \\ 0 & 0\end{pmatrix},
\]
we find that the Schwartz kernel of the projection operator density is given by
\[ \Pi_1(x,y;\xi)  = \frac{1}{2\pi} e^{i(x-y)\xi} \begin{pmatrix}  1 & 0 \\ 0 & 0\end{pmatrix},\quad \theta \Pi_1\theta^* (x,y;\xi)  = -\rJ\mK \Pi_1 \mK \rJ = \frac{1}{2\pi} e^{-i(x-y)\xi} \begin{pmatrix}  0 & 0 \\ 0 & 1\end{pmatrix}.
\] 
The two branches are associated with the (generalized) eigenvalue $E(\xi)=\xi$. We may change variables $\xi\to-\xi$ in the second branch to obtain the following spectral decomposition for the Schwartz kernel of $H$:
\[
  H(x,y) = \dint_{\Rm} \frac{1}{2\pi} e^{i(x-y)\xi} \xi  \begin{pmatrix}  1 & 0 \\ 0 & -1\end{pmatrix} d\xi.
\]
In this example, we may choose $H_1=D$ or $H_1=\bar D$. In both cases, $(-1)^{\ind_2 H}=(-1)^{2\pi\sigma_I[H_1]}=-1$ so that the material represented by such an operator is non-trivial.

This decomposition extends to the operator $H=D\oplus D \oplus \bar D \oplus \bar D \equiv  (D\oplus \bar D) \oplus  (D \oplus \bar D)$ with index  $\ind_2 \,H=(-1)^{2\pi\sigma_I[H_1]}=0$ (topologically trivial) or more generally to $H=D^{\oplus p}\oplus \bar D^{\oplus p}$ with $\Zm_2$ index given by $\ind_2\, H=(-1)^p$. This is a trivial operator for $p$ even and a non-trivial one for $p$ odd.

\paragraph{Two-dimensional example: Dirac operators.} We consider now the Dirac Hamiltonian $H=H_1\oplus \bar H_1$ with $H_1=D\cdot\sigma+y\sigma_3+V_1$ and $V_1$ a compactly supported perturbation, say. Then $H_1$ admits absolutely continuous spectrum that is independent of $V_1$ and given by an infinite number of branches $E_m(\xi)=\pm \sqrt{\xi^2+2n}$ for $m=(\pm,n)$ and $n\geq1$ as well as a branch $E_0(\xi)=-\xi$. 

Each branch $E_m(\xi)$ for $n\geq1$ appears twice, once in $H_1$ and once in $\bar H_1$. As a result these branches do not contribute to the index. Note that only finitely many such branches cross any given bounded open interval $(E_-,E_+)$; see also Fig. \ref{fig:EvsXi} below. The branch $\xi\mapsto E_0(\xi)$ provides a contribution to $H$ equivalent to $D\oplus \bar D$. As a result, we obtain that $\ind_2\,H=-1$.

A natural extension of the above model is the following direct sum of such Dirac operators
\begin{align}\label{eq:fullDiracunperturbed}
    H=h^{\oplus M}\oplus \bar{h}^{\oplus N},
\end{align}
where,
\begin{align}
    \label{eq:Dirac}
    h = D_x \sigma_1 + D_y \sigma_2 + m(y) \sigma_3  = \begin{pmatrix}
    m(y) & D_x-iD_y  \\ D_x+iD_y & -m(y)
    \end{pmatrix}.
\end{align}
 Such Hamiltonians now act on spinors of dimension $\mathcal{N}=2(M+N)$.
The above operators are FTR-symmetric when $M=N$.
 

For $V=V(x,y)$ a compactly supported FTR-symmetric perturbation (valued in $4M\times 4M$ Hermitian matrices), then $H_V=h^{\oplus M}\oplus \bar{h}^{\oplus M}+V$ satisfies the hypotheses of remark \ref{rem:elliptic}. In particular, $\ind_2\, H_V=(-1)^M$ as a direct application of Proposition \ref{prop:z2stab}.
\section{Scattering theory and {$\Zm_2$} invariant}
\label{sec:scattering}

\paragraph{Scattering theory and conductivity.}
Consider as in the preceding section, operators of the form $H=H_0+V$ with $H_0=\mF^{-1}_{\xi\to x}\hat H(\xi) \mF_{x\to\xi}$ a differential operator invariant by translation in the $x-$variable (where $\mF_{x\to\xi}$ is partial Fourier transform in the $x$ variable) and where $V$ is a multiplication operator that is slowly decaying in the $x-$variable (for concreteness assumed compactly supported in the $x-$variable). 

We wish to define a scattering theory and a scattering matrix associated to $H$  and then relate the $\Zm_2$ index to properties of the scattering matrix. In order to define the scattering matrix, we use the framework developed in \cite{chen2023scattering}.  We briefly describe what we need from that paper, to which we refer the reader for additional details. All results described below apply to the operator $H_0=h^{\oplus M}\oplus \bar{h}^{\oplus M}$  with $h$ in \eqref{eq:Dirac}.

Let $E$ be a fixed energy in the interval $[E_-,E_+]$ with $\partial_\xi E_j(\xi)\not=0$ when $E_j(\xi)=E$. Here, $E_j(\xi)$ denotes as before a branch of spectrum of $H_0$. Let $\psi$ be a generalized solution of $(H_0-E)\psi=0$, i.e., a plane wave solution existing in some $L^2_{-s}$ space  (i.e., such that $|\psi(x,y)| (1+|x|^2)^{-s}$ is square integrable, which holds for $s>\frac12$). We can then define a solution $\psi$  a generalized solution also in some  $L^2_{-s}$ space of $(H-E)\psi=0$. More specifically, 
\begin{equation}\label{eq:psiV}
  \psi^V = (I-(H-E)^{-1}_+V)\psi
\end{equation}
with $(H-E)^{-1}_+=\lim_{0<\eps\to0} (H-(E+i\eps))^{-1}$. That the latter limit exists is a consequence of a limiting absorption principle proved in \cite{chen2023scattering} for a class of operators including Dirac operators; see also \cite{ASNSP_1975_4_2_2_151_0,yamada1975eigenfunction}.

For a fixed $E$ as above, the number of linearly independent solutions $\psi$ of $(H_0-E)\psi=0$ is finite. For each such solution, we define a perturbed solution $\psi^V$ as indicated above. 

We now develop a scattering theory for the operator $H$, still following the framework in \cite{chen2023scattering}. Let $\psi_m$ and $\psi_n$ be two generalized eigenfunctions (in some $L^2_{-s}$ space) solutions of
\[
  H\psi_m = E_m \psi_m,\quad H\psi_n = E_n \psi_n
\]
with $E_n$ and $E_m$ real-valued. Define the current correlation
\[
  J_{mn}(x_0) = (\psi_n, 2\pi i[H,P(\cdot-x_0)] \psi_m ) .
\]
Such a current is defined since $\psi_{n,m}$ decay in the variable $y$ and $[H,P(x-x_0)]$ is localizing in the $x-$variable.  From \cite[Lemma 2.3]{chen2023scattering}, we deduce the current conservation:
 \begin{equation}\label{eq:currentcons}
   J'_{mn}(x_0) = 0.
 \end{equation}

We denote by $\psi_m^V$ a generalized solution of the perturbed problem $H\psi_m^V=E_m\psi_m^V$ while $\psi_m$ denotes a generalized solution of the unperturbed problem $H_0\psi_m=E_m\psi_m$.  Since $H_0$ is invariant under translations in the $x$ variable, the unperturbed solutions are of the form
\[
 \psi_m(x,y;\xi) = \frac{1}{\sqrt{2\pi}}e^{i\xi x} \phi_m(y;\xi)
\]
with $\phi_m$ solution of $\hat H(\xi)\phi_m(y;\xi) = E_m(\xi) \phi_m(y,\xi)$. We normalize $\|\phi_m(\cdot,\xi)\|=1$ and assume that $\phi_m$ decays sufficiently rapidly as $|y|\to\infty$. 

We assume that for a given $E$, there is a finite number of real-valued $\xi_m$ such that $E=E_m(\xi_m)$. We denote by $M=M(E)$ the set of corresponding indices $m$. The cardinality of $M$ is $n_++n_-$ where $n_\pm$ corresponds to the number of currents $\pm J_m>0$ associated to each unperturbed plane wave and defined as:
\begin{equation}\label{eq:currentsJn}
  J_n := \partial_\xi E_n (\xi_n)  \not=0.
\end{equation}
We thus assume that the wavenumbers $\xi_n$ are not critical points of the energy band $E_n(\xi)$. 

In the presence of a compactly supported perturbation $V$ in the $x-$variable, we have that the functions $\psi^V_m$ satisfy for $m\in M(E)$,
\begin{equation}\label{eq:approxscattering}
  \psi_m^V(x,y) \approx  \dsum_{n\in M(E)} \alpha^\pm_{mn} \psi_n(x,y) =  \dsum_{n\in M(E)} \alpha^\pm_{mn} \frac{1}{\sqrt{2\pi}} e^{i\xi_n x}  \phi_n(y) 
\end{equation}
where $a \approx b$ means that the difference $a-b$ converges to $0$ uniformly (in $(x,y)$) as $x\to\pm\infty$ and where $\alpha^\pm$ are the corresponding coefficients in these two limits. 
From this we deduce that 
\begin{equation}\label{eq:psimV}
  (\psi_m^V , 2\pi i[H,P(\cdot-x_0)] \psi_n^V ) \approx \dsum_{p,q \in M} \bar\alpha^\pm_{mp}\alpha^\pm_{nq} ( e^{i \xi_p x}\phi_p,  i[H,P(\cdot-x_0)] e^{i \xi_q x}\phi_q)
\end{equation}
where $\approx$ here is the same sense as above but now as $x_0\to\pm\infty$. All terms in \eqref{eq:psimV} are defined since $[H,P(\cdot-x_0)]$ is compactly supported in the $x-$vicinity of $x_0$. We wish to estimate the above right-hand side.
\begin{lemma}\label{lem:unperturbedcurrent}
For $P$ a switch function in $\fS(0,1)$, we have 
  \begin{equation}
    ( e^{i \xi_m x}\phi_m, i[H,P] e^{i \xi_n x}\phi_n) = \delta_{mn} \partial_\xi E_n(\xi_n) = \delta_{mn} J_n.
  \end{equation}
\end{lemma}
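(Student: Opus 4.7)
The plan is to first reduce to the unperturbed operator $H_0$, then combine translation covariance with a current-conservation identity to eliminate all cross-wavenumber contributions, and finally evaluate the remaining diagonal inner product via a Feynman--Hellmann identity.

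Since $V$ acts as pointwise multiplication by a matrix-valued function of $(x,y)$ and $P=P(x)$ is scalar multiplication, $[V,P]=0$, so $[H,P]=[H_0,P]$ and I work with $H_0$ throughout. Introduce the one-parameter family
\[
  F_{mn}(a) := ( e^{i\xi_m x}\phi_m,\, i[H_0,P(\cdot-a)]\, e^{i\xi_n x}\phi_n ), \qquad a\in\Rm.
\]
Since $H_0$ commutes with translation $T_a$ in $x$, one has $T_a P T_{-a} = P(\cdot-a)$ and $T_{-a}(e^{i\xi x}\phi) = e^{i\xi a}e^{i\xi x}\phi$, which together give the covariance identity $F_{mn}(a) = e^{i(\xi_n-\xi_m)a}F_{mn}(0)$. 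On the other hand, the current-conservation argument of Lemma~2.3 in \cite{chen2023scattering} applies verbatim to the unperturbed plane-wave eigenfunctions $e^{i\xi_k x}\phi_k$ of $H_0$ at the common energy $E$, yielding $F'_{mn}(a)\equiv 0$. Comparing the two relations forces $F_{mn}(0)=0$ whenever $\xi_m\neq\xi_n$.

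It remains to evaluate $F_{mn}(0)$ when $\xi_m=\xi_n=:\xi$. Expanding $\hat H(D_x,y,D_y)[P(x)e^{i\xi x}\phi_n(y)]$ by iterating the identity
\[
  D_x^k(P(x)e^{i\xi x}) = e^{i\xi x}\sum_{j=0}^{k}\binom{k}{j}\xi^{k-j}(-i)^j P^{(j)}(x),
\]
which is exact termwise for polynomial symbols in $\xi$ (as in the Dirac-type operator of \eqref{eq:Dirac}) and coincides with the Moyal expansion more generally, the $j=0$ contribution equals $P\hat H(\xi)\phi_n = EP\psi_n$ and cancels against $PH_0\psi_n$. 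This leaves
\[
  i[H_0,P]\, e^{i\xi x}\phi_n = e^{i\xi x}\sum_{j\geq 1}\frac{(-i)^{j-1}}{j!}(\partial_\xi^j\hat H)(\xi)\, P^{(j)}(x)\,\phi_n(y),
\]
which is compactly supported in $x$ since each $P^{(j)}$ with $j\geq 1$ is, so the inner product converges despite the non-$L^2$ plane waves. Using $\int_\Rm P'(x)\,dx = 1$ and $\int_\Rm P^{(j)}(x)\,dx = 0$ for $j\geq 2$ (telescoping), only $j=1$ survives and yields $F_{mn}(0) = \langle\phi_m,(\partial_\xi\hat H)(\xi)\phi_n\rangle_{L^2_y}$.

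Finally, differentiating $\hat H(\xi)\phi_k(\xi) = E_k(\xi)\phi_k(\xi)$ in $\xi$, pairing with $\phi_m$, and using self-adjointness of $\hat H(\xi)$ produces the Feynman--Hellmann identity
\[
  \langle\phi_m,(\partial_\xi\hat H)(\xi)\phi_n\rangle = E_n'(\xi)\langle\phi_m,\phi_n\rangle + (E_n(\xi)-E_m(\xi))\langle\phi_m,\partial_\xi\phi_n\rangle.
\]
For $m=n$ with $\|\phi_n\|=1$, this is $E_n'(\xi_n)=J_n$; for $m\neq n$ at $\xi_m=\xi_n$, the simple-spectrum decomposition in Hypothesis \ref{hyp:H1} gives $\phi_m\perp\phi_n$ in $L^2_y$ and $E_m(\xi)=E_n(\xi)=E$ kills the remaining term. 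Combined with the first step, this delivers $\delta_{mn}J_n$. The main technical obstacle is that plane waves are not square integrable, so manipulations like $(\psi_m, H_0 A\psi_n)=(H_0\psi_m, A\psi_n)$ must be justified; this is resolved because the localizing factors $P'(\cdot-a)$ and $P^{(j)}$ make every relevant $x$-integral run over a compact set.
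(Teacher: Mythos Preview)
Your argument is correct. The paper itself does not prove this lemma; it simply records it as \cite[Lemma 2.4]{chen2023scattering}. Your proof supplies a self-contained derivation combining translation covariance of $H_0$, the current-conservation identity \eqref{eq:currentcons} (also imported from \cite{chen2023scattering}), and a Feynman--Hellmann computation for the diagonal term, together with the compact support of $P^{(j)}$ for $j\geq1$ to make all pairings with the non-$L^2$ plane waves well defined. These are the natural ingredients, and there is nothing to compare against in the present paper.

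One small simplification: in the residual case $\xi_m=\xi_n$ with $m\neq n$, both $E_m(\xi_m)$ and $E_n(\xi_n)$ equal the common energy $E$, so the term $(E_n-E_m)\langle\phi_m,\partial_\xi\phi_n\rangle$ in your Feynman--Hellmann identity vanishes automatically; you then only need $\langle\phi_m,\phi_n\rangle=0$, which follows from the orthogonality of the rank-one projectors $\Pi_j(\xi)$ in the spectral decomposition of $\hat H_0(\xi)$.
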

This is \cite[Lemma 2.4]{chen2023scattering}. We thus conclude from \eqref{eq:approxscattering} and the above lemma that in the limits $x_0\to\pm\infty$,
\begin{equation}\label{eq:decpsimn}
 (\psi_m^V,2\pi i[H,P(\cdot-x_0)]\psi_n^V) \approx \sum_p J_p \bar\alpha^\pm_{mp} \alpha^\pm_{np}.
\end{equation}

We next define the refection and transmission coefficients $R^\pm_{mn}$ and $T^\pm_{mn}$ as
\begin{align}
  \alpha^+_{mn} &= \sqrt{\frac{|J_m|}{|J_n|}}T^+_{mn} \ \ \mbox{when} \ \ J_m>0 \mbox{ and } J_n>0\\
   \alpha^-_{mn} &= \sqrt{\frac{|J_m|}{|J_n|}}T^-_{mn} \ \ \mbox{when} \ \ J_m<0 \mbox{ and } J_n<0\\
   \alpha^+_{mn} &=  \sqrt{\frac{|J_m|}{|J_n|}}R^-_{mn} \ \ \mbox{when} \ \ J_m<0 \mbox{ and } J_n>0\\
    \alpha^-_{mn} &=  \sqrt{\frac{|J_m|}{|J_n|}}R^+_{mn} \ \ \mbox{when} \ \ J_m>0 \mbox{ and } J_n<0,
\end{align}
while we also have $\alpha^-_{mm}=1$ when $J_m>0$ and $\alpha^+_{mm}=1$ when $J_m<0$. All other coefficients $\alpha^\pm_{ij}$ then vanish. We then have the following results from \cite{chen2023scattering}:
\begin{lemma}\label{lem:unitaryscattering}
 The $(n_++n_-)\times(n_++n_-)$ scattering matrix
\begin{align}\label{eq:scatteringmatrix}
      S =\begin{pmatrix} T_+ & R_- \\R_+&T_-\end{pmatrix}
\end{align}
is unitary. Here $T_+$ is the $n_+\times n_+$ matrix with coefficients $T^+_{mn}$, etc.
\end{lemma}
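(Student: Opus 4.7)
The plan is to derive unitarity of $S$ directly from current conservation \eqref{eq:currentcons} applied to the generalized eigenfunctions $\psi_m^V$ and $\psi_n^V$ at a common energy $E$, combined with the asymptotic evaluation of the current provided by \eqref{eq:decpsimn} and Lemma \ref{lem:unperturbedcurrent}. The single identity $S^*S=I$ will emerge by testing the conserved quantity $J_{mn}(x_0)=(\psi_n^V,2\pi i[H,P(\cdot-x_0)]\psi_m^V)$ at both spatial infinities and comparing, for all index pairs $(m,n)\in M(E)\times M(E)$.

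First I would invoke \eqref{eq:currentcons} to assert that $J_{mn}(x_0)$ is independent of $x_0$, so in particular its limits as $x_0\to+\infty$ and $x_0\to-\infty$ coincide. Each limit is computed from \eqref{eq:decpsimn}, yielding
\[
  \sum_p J_p \bar\alpha^+_{mp}\alpha^+_{np} = \sum_p J_p \bar\alpha^-_{mp}\alpha^-_{np}.
\]
Next I would substitute the case-by-case definitions of the $\alpha^\pm_{mn}$ in terms of the transmission and reflection amplitudes, separating the four cases given by the signs of $J_m$ and $J_n$. For example, when $J_m,J_n>0$, the only nonzero $\alpha^-_{mp}$ are $\alpha^-_{mm}=1$ and, for $J_p<0$, $\alpha^-_{mp}=\sqrt{|J_m|/|J_p|}\,R^+_{mp}$, while $\alpha^+_{mp}$ is nonzero only when $J_p>0$ with value $\sqrt{|J_m|/|J_p|}\,T^+_{mp}$. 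Inserting these and accounting for the sign of $J_p$ gives, after a common factor $\sqrt{|J_m||J_n|}$ is pulled out,
\[
  \sum_{p:J_p>0}\bar T^+_{mp}T^+_{np}+\sum_{p:J_p<0}\bar R^+_{mp}R^+_{np}=\delta_{mn},
\]
which is precisely the orthonormality of the first $n_+$ columns of the block matrix $S$. Repeating the computation with $(J_m,J_n)$ of signs $(+,-)$, $(-,+)$, and $(-,-)$ provides the analogous orthonormality statements involving the remaining blocks, yielding $S^*S=I$; since $S$ is a square matrix of size $n_++n_-$, this gives unitarity.

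The only subtle point I anticipate is the justification that the limits in $x_0$ can be taken termwise, i.e., that the asymptotic relation \eqref{eq:approxscattering} may indeed be substituted into the sesquilinear expression $(\psi_m^V,2\pi i[H,P(\cdot-x_0)]\psi_n^V)$ and that the cross-terms between $\psi_p$ and $\psi_q$ with $p\neq q$ drop out. These are controlled by the fact that $[H,P(\cdot-x_0)]$ is compactly supported in a neighborhood of $x=x_0$ (so the bulk tails of the unperturbed plane waves dominate as $x_0\to\pm\infty$), combined with Lemma \ref{lem:unperturbedcurrent}, which kills the off-diagonal contributions $p\neq q$ and extracts the diagonal weights $J_p$. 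Once those analytic details are in hand, the algebraic bookkeeping above delivers $S^*S=I$ directly.
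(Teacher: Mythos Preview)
Your approach is correct and is the natural route to this lemma: combine current conservation \eqref{eq:currentcons} with the far-field evaluation \eqref{eq:decpsimn} to obtain, for all pairs $(m,n)\in M(E)\times M(E)$,
\[
\sum_p J_p\,\bar\alpha^+_{mp}\alpha^+_{np}=\sum_p J_p\,\bar\alpha^-_{mp}\alpha^-_{np},
\]
and then unpack the four sign cases to produce the block identities equivalent to unitarity. Your sample computation for $J_m,J_n>0$ is right; with the indexing that makes the block structure of $S$ well-defined (rows indexed by outgoing modes $n$, columns by incoming modes $m$, so $(T_+)_{nm}=T^+_{mn}$ and $(R_+)_{nm}=R^+_{mn}$), the relation you obtain is exactly $(T_+^*T_++R_+^*R_+)_{mn}=\delta_{mn}$, the top-left block of $S^*S=I$. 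The remaining sign cases fill in the other blocks, and squareness of $S$ then gives $SS^*=I$ as well.

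Note that the paper itself does not prove this lemma: it is quoted directly from \cite{chen2023scattering} (see the sentence preceding Lemma~\ref{lem:unitaryscattering}). Your sketch reconstructs what is presumably the argument given there, and the analytic caveat you flag---that \eqref{eq:approxscattering} may be inserted into the current pairing because $[H,P(\cdot-x_0)]$ is supported near $x_0$ and Lemma~\ref{lem:unperturbedcurrent} kills the off-diagonal terms---is exactly the point that needs justification and is handled in that reference.
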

\begin{lemma} \label{lem:tracescattering}
   Let $S$ be the above scattering matrix. Then
\begin{equation}\label{eq:tracescattering}
  \trr\ T^*_+T_+ - \trr\ T^*_-T_- = n_+-n_-.
\end{equation}
\end{lemma}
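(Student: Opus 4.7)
The plan is to exploit the unitarity of the scattering matrix $S$ established in Lemma \ref{lem:unitaryscattering} and extract the claimed trace identity by comparing the diagonal blocks of $S^*S$ and $SS^*$. No further input about the coefficients is needed; it is a purely algebraic consequence.

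First, I would write out $S^*S = I_{n_++n_-}$ in block form. The two diagonal blocks read
\begin{equation*}
T_+^*T_+ + R_+^*R_+ = I_{n_+},\qquad T_-^*T_- + R_-^*R_- = I_{n_-}.
\end{equation*}
Taking the trace of each of these two identities gives
\begin{equation*}
\trr\, T_+^*T_+ + \trr\, R_+^*R_+ = n_+,\qquad \trr\, T_-^*T_- + \trr\, R_-^*R_- = n_-.
\end{equation*}
Subtracting the second from the first yields
\begin{equation*}
\trr\, T_+^*T_+ - \trr\, T_-^*T_- = (n_+ - n_-) + \bigl(\trr\, R_-^*R_- - \trr\, R_+^*R_+\bigr),
\end{equation*}
so it remains to show that $\trr\, R_+^*R_+ = \trr\, R_-^*R_-$.

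Next I would write out $SS^* = I_{n_++n_-}$ and read off the two diagonal blocks
\begin{equation*}
T_+T_+^* + R_-R_-^* = I_{n_+},\qquad T_-T_-^* + R_+R_+^* = I_{n_-}.
\end{equation*}
Using the cyclicity of the trace (i.e.\ $\trr\, AA^* = \trr\, A^*A$), these two identities give
\begin{equation*}
\trr\, T_+^*T_+ + \trr\, R_-^*R_- = n_+,\qquad \trr\, T_-^*T_- + \trr\, R_+^*R_+ = n_-.
\end{equation*}
Comparing with the traces obtained from $S^*S = I$, the first equation here together with $\trr\, T_+^*T_+ + \trr\, R_+^*R_+ = n_+$ yields $\trr\, R_+^*R_+ = \trr\, R_-^*R_-$, which closes the argument.

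There is no real obstacle: the lemma is a direct corollary of Lemma \ref{lem:unitaryscattering}, combined with the two observations that $\trr\, AA^* = \trr\, A^*A$ and that $\trr\, I_k = k$. The only thing to be careful with is the block shape of $S$ (namely that $T_\pm$ are the diagonal blocks of sizes $n_\pm\times n_\pm$, while $R_\pm$ are rectangular), so that the identifications of the diagonal blocks of $S^*S$ and $SS^*$ with $I_{n_+}$ and $I_{n_-}$ are done correctly.
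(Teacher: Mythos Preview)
Your argument is correct and complete: the identity is an immediate algebraic consequence of the unitarity of $S$ (Lemma~\ref{lem:unitaryscattering}), and you have carried out the block computation carefully. The paper does not actually supply a proof of this lemma; both Lemmas~\ref{lem:unitaryscattering} and~\ref{lem:tracescattering} are quoted from \cite{chen2023scattering}, so there is no ``paper's own proof'' to compare against. Your derivation is the natural one and is exactly what one would expect the cited reference to contain.
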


We know from the spectral theorem that the conductivity may be directly estimated from knowledge of the generalized eigenfunctions $\psi^V_m$ as 
\[
  2\pi \sigma_I = \dsum_m   \Big| \frac{\partial\xi_m}{\partial E} \Big| (\psi_m^V , 2\pi i[H,P] \psi_m^V )
\]
and then recall the final result from \cite{chen2023scattering}:
\begin{theorem}\label{thm:sigmascattering}
 The conductivity may be recast as 
 \[
  2\pi \sigma_I =  \trr\ T^*_+T_+ - \trr\ T^*_-T_-.
 \]
\end{theorem}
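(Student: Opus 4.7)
My plan is to start from the spectral-theoretic expression
\[
  2\pi\sigma_I = \sum_m \Big|\frac{\partial\xi_m}{\partial E}\Big|\,(\psi_m^V, 2\pi i[H,P]\psi_m^V)
\]
stated just above the theorem, and reduce it to the claimed trace identity by substituting the asymptotic expansion \eqref{eq:decpsimn} and unwinding the definitions of the scattering coefficients. The first observation I would exploit is that, by the current-conservation identity \eqref{eq:currentcons}, each correlator $(\psi_m^V, 2\pi i[H,P(\cdot-x_0)]\psi_m^V)$ is independent of $x_0$, so it can be evaluated in either asymptotic limit. For a mode with $J_m>0$ I would choose $x_0\to+\infty$, since $\psi_m^V$ is a purely incoming plane wave at $-\infty$ (so $\alpha^-_{mn}=\delta_{mn}$ there) and only the transmission/reflection content appears at $+\infty$; for $J_m<0$ I would choose $x_0\to-\infty$ symmetrically. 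I would also use the chain rule $|\partial\xi_m/\partial E|=1/|J_m|$ coming from $E=E_m(\xi_m)$ and $J_m=\partial_\xi E_m(\xi_m)$.

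For the $J_m>0$ block, the definition of the transmission coefficients tells me that $\alpha^+_{mn}$ vanishes unless $J_n>0$, with $\alpha^+_{mn}=\sqrt{J_m/J_n}\,T^+_{mn}$. Substituting into \eqref{eq:decpsimn} gives a current correlator equal to $J_m\sum_{n:J_n>0}|T^+_{mn}|^2$; multiplying by $1/|J_m|=1/J_m$ and summing over all $m$ with $J_m>0$ produces $\sum_{m,n:J_m,J_n>0}|T^+_{mn}|^2=\trr T_+^*T_+$. The computation for $J_m<0$ is structurally parallel, but because every $J_n$ appearing in $\sum_n J_n|\alpha^-_{mn}|^2$ is now strictly negative, the relative sign between $J_n$ and $|J_n|$ produces an overall minus, and that block contributes $-\trr T_-^*T_-$. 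Adding the two pieces yields the claimed formula.

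The main obstacle I anticipate lies in the analysis, not the bookkeeping: one must justify that the asymptotic statement \eqref{eq:decpsimn}, which holds in the sense $a\approx b$ uniformly as $x_0\to\pm\infty$, can actually be passed to the limit and substituted into the spectral trace formula for $\sigma_I$ as an equality. This step relies on the limiting-absorption principle of \cite{chen2023scattering} together with sufficient decay of the remainder $\psi_m^V-\sum_n\alpha^\pm_{mn}\psi_n$ in $x$, so that the spatially localized commutator $i[H,P(\cdot-x_0)]$ pairs the remainder to zero in the relevant limit. With \eqref{eq:decpsimn} taken as established and current conservation invoked to choose $x_0$ freely, the remainder of the argument reduces to the careful tracking of signs and of the normalization factors $\sqrt{|J_m|/|J_n|}$ built into the definitions of $T_\pm$.
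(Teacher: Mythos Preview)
Your argument is correct and follows the route the paper itself sets up: the paper does not prove this theorem in the text but cites it from \cite{chen2023scattering}, after laying out exactly the ingredients you use (the spectral formula for $2\pi\sigma_I$ just above the theorem, current conservation \eqref{eq:currentcons}, the asymptotic identity \eqref{eq:decpsimn}, and the definitions of $T_\pm,R_\pm$). Your bookkeeping of signs and of the normalization factors $\sqrt{|J_m|/|J_n|}$ is correct, as is the identification $|\partial\xi_m/\partial E|=1/|J_m|$.

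One small inaccuracy in your motivation: for $J_m>0$ the perturbed eigenfunction $\psi_m^V$ is \emph{not} a pure incoming plane wave at $-\infty$; reflected left-going modes are present there as well, so $\alpha^-_{mn}=\delta_{mn}$ holds only for indices $n$ with $J_n>0$, while $\alpha^-_{mn}=\sqrt{|J_m|/|J_n|}\,R^+_{mn}$ for $J_n<0$. This does not affect your actual computation, since you evaluate at $x_0\to+\infty$ where indeed only the $T^+$ block survives (the coefficients $\alpha^+_{mn}$ with $J_n<0$ vanish by the paper's conventions). Current conservation then guarantees that the choice of limit is immaterial, as you note.
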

\paragraph{Scattering matrix and $\Zm_2$ invariant.}

The above results hold for general operators that may not satisfy any FTR-symmetry. We now specialise to the case where $H=\theta^* H\theta$ is FTR-symmetric in which we observe then that $\sigma_I=0$ as in Lemma \ref{lem:symsigma}.

Because of the FTR symmetry, the matrices $T_\pm$ and $R_\pm$ defined in \eqref{eq:scatteringmatrix} are all $n\times n$ for some $n\geq1$ while the matrix $S$ acts on $2n-$vectors. The first $n$ components correspond to modes with positive current $(\psi,J\psi)>0$ with $J=2\pi i[H,P]$. We then observe that $(\theta\psi,J\theta\psi)<0$ so that the time-reversed modes $\theta\psi$ are the last $n$ components of such $2n-$vectors. See, e.g., \cite{fulga2011scattering} for a general classification of one-dimensional systems based on their reflection matrix.

We then have the following result:
\begin{theorem}\label{thm:scattering}
  The scattering matrix $S$ is a $2n\times 2n$ matrix and we have the following expression for the $\Zm_2$ invariant:
  \[
    \ind_2\, H = (-1)^n.
  \]
  The reflection matrices are skew-symmetric $R_\pm=-R_\pm^t$ and hence admit $0$ as an eigenvalue when $n$ is odd.
\end{theorem}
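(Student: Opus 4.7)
The plan handles the three assertions---size $2n\times 2n$, $\ind_2 H = (-1)^n$, and skew-symmetry of $R_\pm$---in sequence. The first two follow quickly from the Kramers pairing. The current operator $j := 2\pi i[H,P]$ anticommutes with $\theta$: $[H,P]$ commutes with $\theta$ (since $\theta^* H\theta = H$ and $P$ is scalar real-valued) while the factor of $i$ flips sign under antilinear $\theta$. For any generalized eigenfunction $\psi$ of $H$ at energy $E$, the Kramers partner $\theta\psi$ is again a generalized eigenfunction at $E$, with current
\[
  (\theta\psi, j\theta\psi) = -(\theta\psi, \theta j\psi) = -\overline{(\psi, j\psi)} = -(\psi,j\psi),
\]
using antilinearity $(\theta f,\theta g) = \overline{(f,g)}$ and that $j$ is self-adjoint, hence $(\psi,j\psi)\in\Rm$. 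Thus $\theta$ bijects the positive- and negative-current modes, so $n_+ = n_- =: n$ and $S\in \Cm^{2n\times 2n}$.

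For the index formula, I use the decomposition $H_I = H_1 + \theta^* H_1 \theta$ of \eqref{eq:spectraldecomp}: every mode of $H$ at energy $E$ comes from either $H_1$ or $\theta^* H_1 \theta$, with $\theta$ pairing modes across the two summands while flipping current signs. Writing $n_\pm^{(1)}$ for the counts of positive/negative-current modes of $H_1$ at $E$, summing contributions from both summands gives $n = n_+^{(1)} + n_-^{(1)}$, while Lemma \ref{lem:SFbranch} combined with additivity over branches yields $2\pi\sigma_I[H_1] = \sum_j SF_j = n_+^{(1)} - n_-^{(1)}$. Since $n_+^{(1)} - n_-^{(1)} \equiv n_+^{(1)} + n_-^{(1)} = n \pmod 2$, Proposition \ref{prop:gap} then produces $\ind_2 H = (-1)^{2\pi\sigma_I[H_1]} = (-1)^n$.

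The principal step is the skew-symmetry of $R_\pm$. I select a basis of right-movers $\{\psi_a\}_{a=1}^n$ together with the paired basis of left-movers $\{\theta\psi_a\}_{a=1}^n$. A scattering solution $\psi^V$ with incoming coefficients $(a,c)\in\Cm^{2n}$ and outgoing $(b,d)\in\Cm^{2n}$ satisfies $(b,d)^t = S(a,c)^t$. Applying antilinear $\theta$ to the asymptotic expansions of $\psi^V$ at $x\to\pm\infty$ and using $\theta^2 = -I$ reveals that $\theta\psi^V$ is itself a scattering solution, with incoming amplitudes $(-\bar d,\bar b)^t$ and outgoing amplitudes $(-\bar c,\bar a)^t$. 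Introducing $\tau = \bigl(\begin{smallmatrix}0 & -I\\ I & 0\end{smallmatrix}\bigr)$, these rewrite respectively as $\tau\,\overline{(b,d)^t}$ and $\tau\,\overline{(a,c)^t}$. Substituting $\overline{(b,d)^t} = \bar S\,\overline{(a,c)^t}$ and demanding the scattering identity for $\theta\psi^V$ forces
\[
  \tau = S\tau\bar S.
\]
Unitarity $S^{-1} = S^*$ from Lemma \ref{lem:unitaryscattering} together with $\tau^{-1} = -\tau$ then yield the self-duality $S = -\tau S^t \tau$, which when expanded in the block form $S = \bigl(\begin{smallmatrix} T_+ & R_- \\ R_+ & T_-\end{smallmatrix}\bigr)$ immediately produces $R_\pm = -R_\pm^t$ (together with the auxiliary identity $T_- = T_+^t$). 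Finally, for $n$ odd any skew-symmetric matrix $R$ satisfies $\det R = \det(-R^t) = (-1)^n \det R = -\det R$, hence $\det R = 0$ and $0$ is an eigenvalue.

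I expect the main technical work to lie in this third step, tracking the antilinearity and $\theta^2 = -I$ carefully enough to recover the matrix identity $\tau = S\tau\bar S$. The minus sign in $\tau^{-1} = -\tau$---the signature of FTR---is precisely what makes $R_\pm$ skew-symmetric; an ordinary time reversal with $\theta^2 = +I$ would instead produce symmetric reflection matrices.
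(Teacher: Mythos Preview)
Your proof is correct and takes essentially the same approach as the paper: both establish the odd-symmetric relation on $S$ (yours in the form $S=-\tau S^t\tau$, the paper's as $\theta^* S\theta = S^*$, which are equivalent) and then read off $R_\pm=-R_\pm^t$ from the block structure. Your tracking of $\theta$ through the asymptotic expansions and your parity count $2\pi\sigma_I[H_1]=n_+^{(1)}-n_-^{(1)}\equiv n\pmod 2$ are more explicit than the paper's brief appeals to ``by construction'' and to the gapping deformation of Section~\ref{sec:Z2}, but the underlying arguments coincide.
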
 
\begin{proof}
  Let $n$ be the number of branches with positive current. We know from the preceding section that any even number of branches may be continuously deformed so that $E$ is gapped out. This shows that $\ind_2\, H = (-1)^n$ directly. 
  
  The scattering matrix is FTR-symmetric by construction of the generalized eigenvectors and scattering coefficients so that 
  \[
   \theta^* S \theta =  \theta^*  \begin{pmatrix}  T_+ & R_- \\ R_+ & T_-   \end{pmatrix} \theta =  \begin{pmatrix}  \bar T_- & - \bar R_+ \\ -\bar R_- & \bar T_+ \end{pmatrix} = S^*=\begin{pmatrix}  T_+^* &  R_+^* \\ R_-^* & T_-^* \end{pmatrix} .
  \]
  This implies that $R_\pm = -R_\pm ^t$ is skew-symmetric. When $n$, the dimension of $R_\pm$ is odd, standard results on the decomposition of skew-symmetric matrices \cite{1751-8121-41-40-405203,Y-CMS-61} show that $0$ must be an eigenvalue of $R_\pm$.
\end{proof}

For any of the operators $H_1$ such that $H=H_1+\theta^*H_1\theta$, we can construct a scattering matrix $S_1$ and scattering coefficients $T_{1\pm}$ and $R_{1\pm}$. We then have 
\[
  2\pi \sigma_I[H_1] = \Tr  (T_{1+}^*T_{1+} - T_{1-}^*T_{1-} )
\]
relating the current invariant to the scattering coefficients. We thus obtain another expression for the invariant, namely:
\begin{equation}\label{eq:ind2TR}
  \ind_2\, H = (-1)^{  \Tr  (T_{1+}^*T_{1+} - T_{1-}^*T_{1-} )}.
\end{equation}

\paragraph{Transmission and localization.} 


The invariant in the form given in Theorem \ref{thm:scattering} may be concretely computed from knowledge of the (size of the) scattering matrix $S$ at any given energy $E\in [E_-,E_+]$ that is not a critical value of the branches $E_j(\xi)$. 

A practically more useful quantity would be knowledge of $\Tr\,  T_{+}^*T_{+} = \Tr \, T_{-}^*T_{-} $ separately as these quantities provide information about transmission through the perturbation $V$.

There is no reason for such quantities to be integer-valued and stable with respect to perturbations. However, we know in several instances of topologically trivial settings that both terms converge to $0$ as the random perturbation $V$ increases. This is the regime of Anderson localization. Because of interactions with the perturbation $V$, signals are asymptotically entirely backscattered; see for instance the monograph \cite{fouque2007wave}. 

The non-trivial topology of Chern topological insulators and $\Zm_2$ topological insulators may be interpreted as an obstruction to Anderson localization \cite{PS,topological}. For FTR-symmetric Dirac operators, it is shows (for a specific model of random fluctuations) in \cite[Theorems 5.1 \& 6.3]{topological} that $\Tr\,  T_{+}^*T_{+} = \Tr \, T_{-}^*T_{-} $ is bounded below by $1$ and converges to $1$ as the thickness of the random slab $V$ increases when the $\Zm_2$ index is $-1$. In contrast, these transmission terms converge to $0$ in the case of a trivial FTR-symmetric operator. That $\Tr\,  T_{+}^*T_{+} = \Tr \, T_{-}^*T_{-} $ is bounded below by $1$ when $n$ is odd and $\ind_2=(-1)^n=-1$ is a corollary from Theorem \ref{thm:scattering}. Indeed, since $S$ is a unitary matrix, we have, among other relations,
\[
  T_{+}^*T_{+} +  R_{+}^*R_{+} = I_n.
\]
This implies that the non-negative operator $T_{+}^*T_{+}$ equals identity restricted to the kernel of $R_{+}$, which is at least one-dimensional. Thus, $\Tr\,  T_{+}^*T_{+} \geq1$. This surprising result indicates that the $\Zm_2$ invariant is indeed an obstruction to Anderson localization.

Proving that $\Tr\,  T_{+}^*T_{+}$ is asymptotically equal to $1$ in the presence of strong fluctuations is more involved and (presumably) depends on the statistics of the random perturbations; see \cite[Theorem 6.3]{topological} for such a result. The objective of the next section is to demonstrate such results numerically for more complex models than Dirac operators and for a large class of perturbations $V$.

\section{Integral formulation and computation of scattering matrices}
\label{sec:num}
We consider differential models where the theory presented in the preceding sections applies and construct an integral formulation for the numerical simulation of perturbed systems that allows us to compute scattering matrices robustly and accurately.

Our first class of models is based on systems of Dirac equations and on the defining building block $h$, a $2\times2$ system defined in \eqref{eq:Dirac}.
The unperturbed generalized Dirac model is then described by an $\mathcal{N}\times\mathcal{N}$ Hermitian matrix-valued operator of the form 
\[H=h^{\oplus M}\oplus \bar{h}^{\oplus N}, \qquad \mathcal{N}=2(M+N).\] 
We use $H_V=H+V$ to denote the perturbed system where $V$ models some Hermitian perturbation.
To simplify calculations, we perform a unitary transformation replacing building block $h$ in \eqref{eq:Dirac} and \eqref{eq:fullDiracunperturbed} by $qhq^*$, which we will still call $h$, for $q=\frac{1}{\sqrt2} (\sigma_1+\sigma_3)$. We observe that $q^*=q$ and $q(\sigma_1,\sigma_2,\sigma_3)q=(\sigma_3,-\sigma_2,\sigma_1).$  In this new basis, 
\begin{align} \label{eq:DiracNew}
     h = D_x \sigma_3 - D_y\sigma_2 + y \sigma_1. 
\end{align}
We also still denote by $V$ the transformed perturbation $QVQ^*$, where $Q=q^{\oplus(m+n)}$ so that $H_V$ is now expressed in the new basis as well.

In order to obtain explicit expressions for Green's functions (Schwartz kernels of resolvent operators), we assume $m(y)=y$ a linear domain wall as in \cite{bal2023asymmetric}. 

\subsection{Generalized Dirac operators and spectral analysis}

The unperturbed operator $h$ in \eqref{eq:DiracNew} is invariant under translations in the $x-$variable. Denoting by $\xi$ the dual Fourier variable to $x$ and $\mF$ the corresponding transform, we observe that $h=\mF^{-1}\hat h(\xi)\mF$ satisfies
\begin{align} \label{eq:hfourier}
   \hat h(\xi)-E = \xi\sigma_3 - D_y\sigma_2 + y \sigma_1 -E=\begin{pmatrix}  \xi_m-E & \fa \\  \fa^* & -(\xi_m+E) \end{pmatrix},
\end{align}
where $\fa=\partial_y+y$ may be interpreted as an annihilation operator. Indeed, since $\fa e^{-\frac12 y^2}=0$, the $L^2(\Rm)-$kernel of $\fa$ is one-dimensional while that of $\fa^*=-\partial_y+y$ is trivial.

We also consider generalizations of the above operators with a building block $\hat h_p(\xi)$ replacing $\hat h(\xi)\equiv \hat h_1(\xi)$ and defined by
\begin{align} \label{eq:hpfourier}
   \hat h_p(\xi)-E =\begin{pmatrix}  \xi-E & \fa^p \\  (\fa^*)^p & -(\xi+E) \end{pmatrix}.
\end{align} 
To simplify notation, we often omit the index $p$ in what follows.

Following a similar procedure as the one proposed in \cite{topological,bal2023asymmetric}, we now construct the (never trivial) kernel of $h-E$ for $E\in R$. To this end, we first define the countable set $M_h$ as the union of the following (pairs of) indices $m$. For $\Nm\ni n\geq p$, we define $m=(n,\epm)$ with $\epm=\pm1$, while for $0\leq n<p$, we define $m=(n,-1)$. In particular, $(n,1)\not\in M_h$ when $n<p$. This asymmetry reflects the fact that the $L^2-$kernel of $\fa^p$ has dimension $p$ while that of $\fa^*$ is trivial.

The kernel of $h-E$ is then found to be 
\begin{align}
    \psi_{h,m}(x,y;E)=e^{i\xi_m x}\phi_{h,m}(y;E),
\end{align}
where 
\begin{equation}\label{eq:xim}
    \xi_m=\left\{\begin{aligned}
        &\epm(E^2-2^p\frac{n!}{(n-p)!})^{\frac12}\quad &n\geq p\\
        &\epm E \quad &n< p
    \end{aligned}\right.
\end{equation}
and
\begin{align}\label{def:phi}
    \phi_{h,m} = c_m \begin{pmatrix} \fa^p \varphi_n \\ (E-\xi_m) \varphi_n \end{pmatrix} =  c_m \begin{pmatrix} \beta_n \varphi_{n-p} \\ (E-\xi_m) \varphi_n \end{pmatrix} ,\qquad c_m^{-2} = \beta_n^2 + |E-\xi_m|^2.
\end{align}
Here, $\varphi_n$ denote the $n$-th Hermite polynomials, and  $\beta_n^2=2^p\frac{n!}{(n-p)!}$. In particular, $\beta_n=0$ when $n<p$. The spectral decomposition of $\bar{h}$ is similar. We find:
\begin{align}
\psi_{\bar h,m}=e^{i\xi_m}\phi_{\bar h,m}(y;E),
\end{align}
where $\phi_{\bar h,m}(y;E)= c_{\bar h,m}\begin{pmatrix} \fa^p \varphi_n \\ (E+\xi_m) \varphi_n \end{pmatrix} =  c_{\bar h,m} \begin{pmatrix}  \beta_n \varphi_{n-1} \\ (E+\xi_m) \varphi_n \end{pmatrix}$ with $ c_{\bar h,m}^{-2} = \beta_n^2 + |E+\xi_m|^2$, is the solution of,
\begin{align} \nonumber
  (\hat{\bar{h}}(\xi_m)-E) \phi_{\bar h,m} =  \begin{pmatrix}  -\xi_m-E & \fa^p \\  (\fa^*)^p & \xi_m-E \end{pmatrix} \phi_{\bar h,m}=0.
\end{align}
Here, due to the time-reversal symmetry, we assume $m=(n,\epm)$ with $\epm=\pm1$ for $n\geq p$, while for $n<p$, we define $m=(n,+1)$. We denote such index set of $m$ for $\bar h$ as $M_{\bar h}$

We observe that the functions $\{\phi_{h,m}(y;E)|m\in M_h\}$ (or $\{\phi_{\bar h,m}(y;E)|m\in M_{\bar h}\}$) form a complete family and hence a basis of $L^2(\Rm_y;\Cm^{2})$.
The eigenspaces of the unperturbed operator $H=h^{\oplus M}\oplus \bar{h}^{\oplus N}$ may be decomposed as a direct sum of functions $\{\phi_{h,m}(y;E)|m\in M_h\}$ and $\{\phi_{\bar h,m}(y;E)|m\in M_{\bar h}\}$. By a slight abuse of the notation for index $m$, we denote $\psi_m(y;E)$ as the $m$-branch solution of $(H-E)\psi=0$ with   
\begin{align}\label{def:psimfull}
    \psi_{m}(x,y;E)=e^{i\xi_m x}\phi_{m}(y;E)
\end{align}
and $m=(\sm,n,\epm)\in \mathcal{M}=(M_h)^{\oplus M}\oplus(M_{\bar h})^{\oplus N}$. More precisely, for $s_m\leq M$, $m$ denotes a branch corresponding to the $(n,\epm)$ branch of the $p$-th operator $h$ in the full operator $h^{\oplus M}$. Then $\phi_{m}(y;E)=e_\sm\otimes \phi_{(n,\epm)}$ where $e_{\sm}$ is the $\sm$ coordinate vector of $\Rm^{M+N}$. Note that when $n=0$, then $\epm<0$. For $\sm> M$, $m$ denotes a branch corresponding to the $(n,\epm)$ branch of the $(\sm-M)$-th operator $\bar h$ in the full operator ${\bar h}^{\oplus M}$. Then $\phi_{m}(y;E)=e_\sm\otimes \phi_{(n,\epm)}$ where $e_\sm$ is the $\sm$ coordinate vector of $\Rm^{M+N}$. Recall that $\epm>0$ when $n< p$.  

\subsection{Green's Function of $H-E$ and integral formulation}

 The unperturbed {\em outgoing} Green's kernel for $h-E$ is constructed explicit in \cite{bal2023asymmetric}. Since $h$ admits the whole real line as (absolutely) continuous spectrum, the operator $h-E$ is not uniquely boundedly invertible. The outgoing Green's function for $E\in\Rm$ is the kernel of the well-defined operator $\lim_{\omega\to0}(h-(E+i\omega))^{-1}$; see \cite{bal2023asymmetric}. For generalized Dirac operators $H-E$, we first list the Green's kernels $G_h$ and $G_{\bar h}$ of $h-E$ and $\bar h -E$, respectively, as:
 \begin{align}\label{green_function_matrix}
   G_h =  \begin{pmatrix} (D_x+E) G_+ & \fa^p G_- \\ (\fa^*)^p G_+ & (-D_x+E) G_- \end{pmatrix}, \\
   G_{\bar h} =  \begin{pmatrix} (E-D_x) G_+ & \fa^p G_- \\ (\fa^*)^p G_+ & (D_x+E) G_- \end{pmatrix}\label{green_function_matrix2},
\end{align}
where 
\begin{align}
\label{G_plus}
   G_+(x,y;y_0) = \dsum_{n\geq0} \frac{-1}{2\theta_{n+p}} e^{\theta_{n+p}|x|} \varphi_n(y) \varphi_n(y_0),\\
\label{G_minus}
    G_-(x,y;y_0) = \dsum_{n\geq0} \frac{-1}{2\theta_n} e^{\theta_n|x|} \varphi_n(y) \varphi_n(y_0),
\end{align}
and $ \theta_n=  i\sqrt{E^2-\beta_n^2} =i\xi_{(n,1)}$.

We then construct the Green's function for $H-E$ using the building blocks in \eqref{green_function_matrix} and \eqref{green_function_matrix2} as follows:
\begin{align}\label{eq:FullGreensKernel}
    (H-E)^{-1}=(h^{\oplus M}-E)^{-1}\oplus (\bar{h}^{\oplus N}-E)^{-1}=(G_h)^{\oplus M}\oplus (G_{\bar h})^{\oplus N}.
\end{align}

Following the method developed in \cite{bal2023asymmetric}, we use the above unperturbed Green's function to solve for generalized eigenfunctions of the perturbed operator $H_V=H+V$. We highlight the differences and refer the interested readers to \cite{bal2023asymmetric} for a more detail discussion.

The solution of perturbed eigenproblem $H_V$ is represented by 
\begin{align} \nonumber
    \psi=\psin+\psiout,
\end{align} 
where $\psin$ is one of the possible incoming plane waves, e.g.,  $\psi_m$ described in \eqref{def:psimfull}, and $\psiout$ is the corresponding outgoing solution. We denoted $\psin+\psiout$ as $\psi^V$ in \eqref{eq:psiV}.  As in \cite{bal2023asymmetric}, $\psiout$ admits a convenient integral representation via the Green's kernel of the unperturbed problem
\begin{align}\label{representation_psiout}
   \psiout(x,y)= \int_{\Rm^2} G(x,y;x_0,y_0)\rho(x_0,y_0)dx_0dy_0.
\end{align}
The density $\rho$ we wish to compute then satisfies the following integral equation
\begin{align}\label{IntegralFormulation}
\rho(x,y)+V(x,y)\int_{\Rm^2} G(x,y;x_0,y_0)\rho(x_0,y_0)dx_0dy_0=-V(x,y)\psin(x,y).
\end{align}

In what follows, we assume that $V$ is a smooth Hermitian matrix-valued compactly supported function. As is apparent from the above formulation, the density $\rho$ vanishes outside of the support of $V$. Therefore, only the support of $V$ needs to be discretized in the numerical simulations.  
The above formulation for $\rho$ is therefore much less expensive both memory-wise and in computing time than a method computing $\psiout$ directly.

\subsection{Computation of scattering matrices}\label{sec:algo}
For $V$ compactly supported, we decompose the unknown density $\rho$ as: 
\begin{align}\label{rho_coeff}
    \rho(x,y)=\sum_{i,n} \rho_{i,n}
P_i(x)\varphi_n(y),
\end{align}
where $P_i$ denotes the i-th Legendre polynomial on the support of $V$,  $\varphi_n$ is $n$-th order Hermite functions, and $\rho_{i,n}\in \Cm^{\mathcal{N}}$ are the corresponding generalized Fourier coefficients. 
The above sum is truncated as 
\begin{align}\label{rho_truncated}
    \tilde{\rho}(x,y)=\sum_{i\le n_x,n\le n_y} \rho_{i,n}
P_i(x)\varphi_n(y).
\end{align}
We refer the reader to \cite{bal2023asymmetric} for details of the implementation and computation of \eqref{rho_truncated}.

Once the truncated density $\tilde{\rho}$ is constructed on $I\times \Rm$ for some interval $I=[x_L,x_R]$, we can recover the solution of $H_V-E$ on the whole physical space by
\begin{align}
    \psi(x,y) = \psin(x,y) + \int_{\Rm^2} G(x,y,x_0,y_0) \tilde{\rho}(x_0,y_0)dx_0dy_0.
\end{align}

Since $\{\phi_{h,m}(y;E)|m\in M_h\}$ (or $\{\phi_{\bar h,m}(y;E)|m\in M_{\bar h}\}$) form a basis of $L^2(\Rm_y;\Cm^{2})$, we can recast $\psi$ as a combination of branches of kernels of the unperturbed operator $H-E$ as,
\begin{align}\label{psi_decomposition}
	\psi(x,y) = \dsum_{m\in \mathcal{M}}  \alpha_m(x)  \psi_m(x,y),
\end{align}
with $\alpha_m(x)=(\psi_m(y),\psi(x,y))_y$  constant outside of $I$.
 
Instead of mapping incoming fields to outgoing fields (in the far field), we consider the corresponding linear map on $[x_L, x_R]$.  The incoming condition is then the coefficients of the right traveling modes $\phi_m$ ($\epm>0$) at the left boundary and the left traveling modes $\phi_m$ ($\epm<0$) at the right boundary, namely  $\alpha_{+}(x_L)\in \Cm^{M(n_y-p)+Nn_y}$ and $\alpha_{-}(x_R)\in\Cm^{Mn_y+N(n_y-p)}$. The outgoing solution consists of the coefficients of the left traveling modes $\phi_m$,($\epm<0$), at the left boundary and the right traveling modes $\phi_m$ ($\epm>0$) at the right boundary, namely  $\alpha_{-}(x_L)$ and $\alpha_{+}(x_R)$. 

The transmission (TR) matrix $\mathcal{S}$ corresponding to the perturbation $V$ and computed on the interval $[x_L,x_R]$, is then defined as,
\begin{align}\label{def:TRMatrix}
    \begin{pmatrix}
     \alpha_{+}(x_R)\\
    \alpha_{-}(x_L)\\
    \end{pmatrix}=
    \mathcal{S}
    \begin{pmatrix}
        \alpha_{+}(x_L)\\
    \alpha_{-}(x_R)
    \end{pmatrix}, \qquad \qquad \mathcal{S} = \begin{pmatrix}
T_+ & R_-\\ R_+ & T_-
    \end{pmatrix}.
\end{align}
We point out that the transmission reflection matrix $\mathcal{S}$ in \eqref{def:TRMatrix} is a generalization of the scattering matrix $S$ in \eqref{eq:scatteringmatrix}. Indeed, in the numerical computations, we need to include evanescent modes ($m\in M_h$ or $M_{\bar h}$ for which $\xi_m$ is purely imaginary). As discussed in Section \ref{sec:scattering}, the evanescent modes do not contribute to currents and the index.  However, they are essential in constructing a merge-and-split algorithm to compute the scattering matrix for perturbations $V$ supported on a larger domain.
For instance, let $  \mathcal{S}^1=  \begin{pmatrix}
T_+^1 & R_-^1\\ R_+^1 & T_-^1
    \end{pmatrix}$ and $  \mathcal{S}^2=  \begin{pmatrix}
T_+^2 & R_-^2\\ R_+^2 & T_-^2
    \end{pmatrix}$ denote the TR matrices of two adjacent intervals in the $x$ direction, $I_1$ and $I_2$, respectively. Then the TR matrix for $I_1\cup I_2$ is 
\begin{align}
    \begin{pmatrix}
   T_+^1(I-R_-^1R_+^1)^{-1}T_+^1 &T_+^1(I-R_-^1R_+^1)^{-1}R_-^1T_-^1 +R_-^1\\
  T_-^1(I-R_+^1R_-^1)^{-1}R_+^1T_+^1+R_+^1 & T_-^2(I-R_+^1R_-^1)^{-1}T_-^1
    \end{pmatrix}\label{scattermatrix_formula}.
\end{align}

We apply a {\em binary merging} strategy to compute the transmission reflection matrix over long intervals. For an algorithm with $L$ levels of merging  over interval $[0,l]$, we divide $[0,l]$ into $2^L$ disjoint intervals of length $\frac{l}{2^L}$. On each interval (leaf), we compute the transmission reflection matrices in \eqref{def:TRMatrix}. We next apply the merging formula \eqref{scattermatrix_formula} to adjacent intervals $[\frac{2kl}{2^L},\frac{(2k+1)l}{2^L}]$ and $[\frac{(2k+1)l}{2^L},\frac{(2k+2)l}{2^L}]$ for $k=0,\ldots, 2^{L-1}-1$. We then apply the merging formula \eqref{scattermatrix_formula} to compute the TR matrix on $2^{L-2}$ intervals from those computed on $2^{L-1}$. We repeat this procedure iteratively until we reach the TR matrix for the full interval $[0,l]$.  Details of the algorithm may be found in \cite{bal2023asymmetric}. Once the TR matrix $\mathcal{S}$ is constructed on a final interval, the scattering matrix $S$ on that interval may be evaluated by limiting $\mathcal{S}$ to entries that only relate to the propagating modes. This procedure was used in \cite{bal2023asymmetric} to compute scattering matrices essentially to arbitrary precision (typically with at least ten digits of accuracy).

\section{Numerical Examples}
\label{sec:examples}
In this section, we present numerical simulations of the generalized Dirac models introduced in section \ref{sec:num} using the integral formulation of section \ref{sec:algo}. After validating the accuracy and convergence properties of the algorithm, our main objective is to compute transmission coefficients $\Tr(T^*_+T_+)$ and $\Tr(T^*_-T_-)$ as presented in section \ref{sec:scattering} and estimate how they depend on the perturbation $V$ and in particular on the size of the support of $V$. 
\subsection{Verification of Convergence}\label{sec:convergence}
We first numerically validate the convergence of the algorithm of section \ref{sec:algo} by computing the scattering matrix $S$ for increasing values of the discretization parameters $(n_x,n_y)$ in \eqref{rho_truncated}. 

Consider the unperturbed system $H=(h\oplus \bar h)^{\oplus 2}$ with thus $M=N=2$ and $p=1$ and acting on an $8$-vector. In Fig. \ref{fig:EvsXi}, we display  the first few branches of continuous spectrum of $h$ (left) and $\bar h$ (right). The two spectra are symmetrical (in $\xi\to-\xi$) except for the non-dispersive (linear) branch characterized by opposite group velocities $-1$ and $+1$ for $h$ and $\bar h$, respectively.

For a choice of energy $E=1.8$, we observe from Fig. \ref{fig:EvsXi} that each building block $h$ in $H$ has exactly two left-propagating modes and one right-propagating mode.
The time-reversed counterpart $\bar h$ thus has two right-propagating modes and one left-propagating mode.  
\begin{figure}[ht!]
    \centering
      \begin{subfigure}{0.45\textwidth}
		\includegraphics[width = \linewidth]{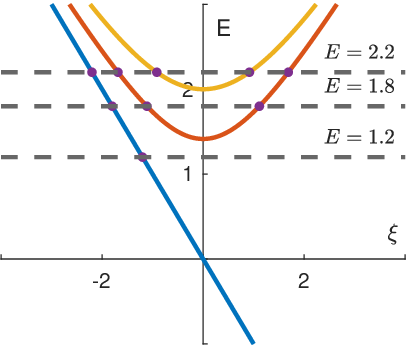}
		\caption{Branches of $h$}
  \end{subfigure}
\begin{subfigure}{0.45\textwidth}
		\includegraphics[width = \linewidth]{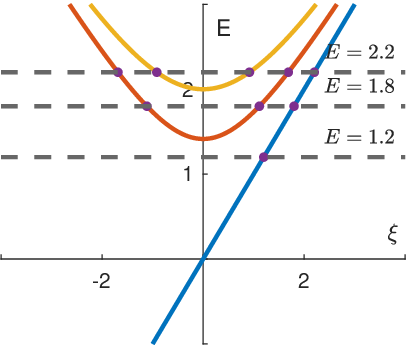}
		\caption{Branches of $\bar h$}
  \end{subfigure}
\caption{Branches of continuous spectrum of $h$ and $\bar h$.}
    \label{fig:EvsXi}
\end{figure}
The resulting scattering matrix of $H$ at $E=1.8$ is therefore a $12\times 12$ matrix with each component $T_\pm$ and $R_\pm$ being $6\times6$ matrices.

Introduce now a perturbation $V_1$ defined as 
\begin{align}
    V_1(x,y;l)=v_1(x,y)\Big(I_8+\begin{bmatrix}
        0 &1\\ 1&0
    \end{bmatrix}\otimes\begin{bmatrix}
        1 &1\\ 1&1
    \end{bmatrix}\otimes \sigma_2\Big)\chi_{[0,l]}
\end{align} where
\begin{align}\label{v_p1}
    v_1(x,y)=\exp(-y^2)\big(&y\cos((-E-\sqrt{E^2-2})x)+y\cos((-E+\sqrt{E^2-2})x)\\ \nonumber
    &+\cos(2\sqrt{E^2-2}x)+\cos(2Ex)\big).
\end{align}
The oscillatory structure of $V_1$ is introduced to maximize interactions/coupling between the different propagating modes.  We take the length of the perturbation as $l=1$. To compute the  reference scattering matrix, we take $(n_{x,ref},n_{y,ref})=(24,150)$. In  Fig.\ref{fig:convergence_leaf}, we compare the relative error of scattering matrix in Frobenius norm for increasing choices of $(n_x,x_y)$.
\begin{figure}[ht!]
    \centering
      \begin{subfigure}{0.32\textwidth}
		\includegraphics[width = \linewidth]{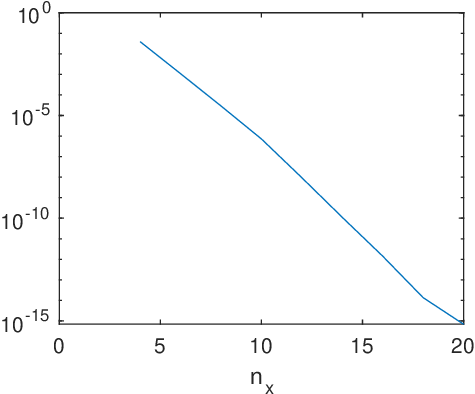}
		\caption{varying $n_x$, $n_y=n_{y,ref}=150$}
  \end{subfigure}
\begin{subfigure}{0.32\textwidth}
		\includegraphics[width = \linewidth]{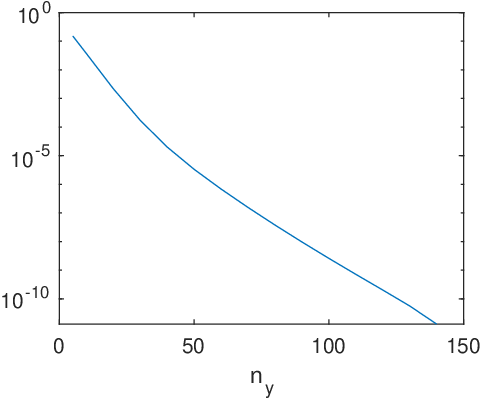}
		\caption{varying $n_y$, $n_x=n_{x,ref}=24$}
  \end{subfigure}
\caption{Convergence of scattering matrix in Frobenius norm without merging under various discretization}
    \label{fig:convergence_leaf}
\end{figure}
We observe an exponential convergence to the solution as $(n_x,n_y)$ increase as expected in a situation with smooth coefficients $V$ and consistent with the simulations in \cite{bal2023asymmetric}.

We next turn to the validation of the convergence of the algorithm with binary merging strategy. We consider the same configuration as for Fig.\ref{fig:convergence_leaf}.  We split the interval $[0,l]$ into $2^L$ equal-length sub-intervals. The scattering matrix computed by  $(n_{x,ref},n_{y,ref})=(24,150)$ without merging as in the previous example is taken as the ground truth reference. 

\begin{figure}[ht!]
    \centering
      \begin{subfigure}{0.32\textwidth}
		\includegraphics[width = \linewidth]{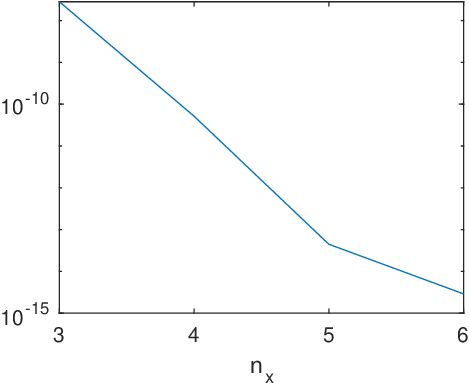}
		\caption{varying $n_x$, $L=4$, $n_y=150$}
  \end{subfigure}
\begin{subfigure}{0.32\textwidth}
		\includegraphics[width = \linewidth]{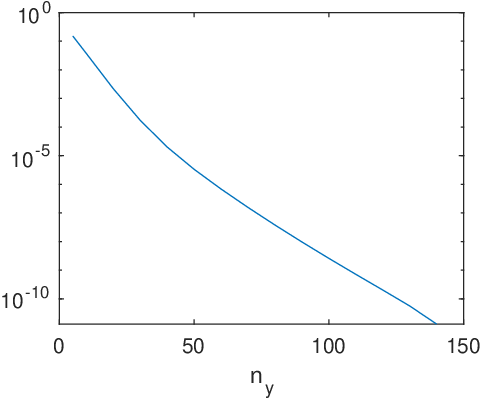}
		\caption{varying $n_y$, $L=4$,  $n_x=8$}
  \end{subfigure}
  \begin{subfigure}{0.32\textwidth}
		\includegraphics[width = \linewidth]{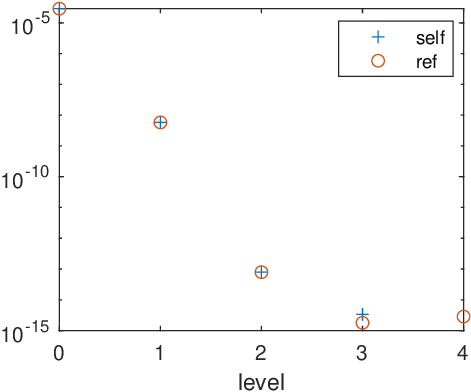}
		\caption{varying $L$, $n_x=8$, $n_y=150$}
  \end{subfigure}

\caption{Convergence of scattering matrix in Frobenius norm with the merge-and-split algorithm}
    \label{fig:convergencefull}
\end{figure}
Fig.\ref{fig:convergencefull} displays errors in the scattering matrix in Frobenius norm for increasing values of $(n_x,n_y,L)$. From Fig.\ref{fig:convergencefull}, we observe that with $L=4$ levels of merging, we need only to select $n_x=6$ to achieve a similar high accuracy (near machine-precision accuracy $10^{-15})$ as one with $n_x=20$ in Fig.\ref{fig:convergence_leaf}(a). In the numerical simulations below, we always select $L$ so that the length of each leaf interval is no greater than  $\frac{1}{2^4}$. Fig.\ref{fig:convergencefull}(b) displays the dependence on $n_y$, which is similar to that observed in Fig.\ref{fig:convergence_leaf}(b) since merging only occurs in the direction $x$. Fig.\ref{fig:convergencefull}(c) displays computational errors for levels of merging between one and four when compared to either the ground truth reference without merging (denoted as \emph{ref}) or the reference obtained with four levels of merging (denoted as \emph{self}). This shows that with a high level of merging, the algorithm is accurate even for a small value of $n_x$.

\subsection{FTR symmetry and Anderson Localization for Dirac operators}\label{sec:4x4TRS}
We now illustrate with numerical simulations one of the main features of FTR-symmetric Hamiltonians, namely a robust transmission in both directions even in the presence of strong FTR-symmetric fluctuations in the system. This manifestation may be seen as a topologically protected obstruction to Anderson localization. The latter, which requires transmission to decay as the length of a slab of perturbations increases, is confirmed numerically for topologically trivial FTR-symmetric Hamiltonians as well as for nontrivial FTR-symmetric Hamiltonians in the presence of non-FTR-symmetric perturbations.


We start with an unperturbed operator given by $H=(h\oplus\bar h)$ acting on a four-dimensional spinor with thus $p=1$ as well as $M=N=1$. As shown in Lemma \ref{lem:symsigma}, the conductivity $2\pi\sigma_I[H+V]$ of the system is $0$ regardless of the (say, spatially compactly supported) perturbation $V$. 

However, since $2\pi\sigma_I=\Tr (T^*_+T_+)-\Tr (T^*_-T_-)=0$, each individual term $\Tr (T^*_+T_+)=\Tr (T^*_-T_-)$ referred to as one-sided transmissions does not necessarily vanish. In the absence of perturbation $V$, these terms simply count the number of propagating modes in the system. This number is energy-dependent and for instance equal to $3$ for $E=1.8$ as already observed in Fig. \ref{fig:EvsXi}. 

The theory of section \ref{sec:scattering} shows that $\ind_2\, H=\ind_2\,H_V=-1$, i.e., $H$ models a topologically non-trivial FTR-symmetric phase. As a consequence, $H$ cannot be gapped and we thus expect robust transport along the $x-$axis even in the presence of FTR-symmetric perturbations. We have shown that $\Tr (T^*_+T_+)\geq1$ in such a setting and the results obtained in \cite{topological} show that $\Tr (T^*_+T_+)$ is close to $1$ in the presence of strong randomness. At the same time, the fact that $2\pi\sigma_I[H+V]=0$ indicates that Anderson localization (characterized by asymptotically small transmission across a large slab of randomness) should prevail for non-FRT-symmetric perturbations $V$. This is what we aim to demonstrate numerically. 

We consider two types of perturbation $V$ in \eqref{v_p1}. The first one is FRT-symmetric:
\begin{align}\label{eq:V2TRS}
    V_{TR}(x,y;l)=v_1(x,y)\begin{bmatrix}
    I_2 & (1-i)\sigma_2 \\ (1+i)\sigma_2 & I_2
\end{bmatrix}\chi_{[0,l]}.
\end{align}
The second perturbation is a slight modification of \eqref{eq:V2TRS} that breaks the FTR symmetry: 
\begin{align}\label{eq:V2NTR}
    V_{NTR}(x,y;l)=v_1(x,y)\begin{bmatrix}
    I_2 & (1-i)\sigma_2 \\ (1+i)\sigma_2 & -I_2 
\end{bmatrix}\chi_{[0,l]}.
\end{align}
Consider an energy level $E=1.8$ and $(n_x,n_y)=(6,40)$ in \eqref{rho_truncated} to achieve high accuracy in the simulations. We apply $L=9$ levels of binary merging on the interval $[0,32]$ to compute the scattering matrices and one-sided transmissions. The levels of merging $L$ is selected to achieve machine-precision accuracy in $x$ direction, see Figure.\ref{fig:convergencefull}(c) and related discussion.

Fig.\ref{fig:TR_symmetry}(a)-(b) presents the computed one-sided transmissions $\Tr  (T_+^*T_+)$ and $-\Tr( T_-^*T_-)$ against the length of the perturbation $l$ for the two perturbations defined above.

\begin{figure}[ht!]
    \centering
      \begin{subfigure}{0.32\textwidth}
		\includegraphics[width = \linewidth]{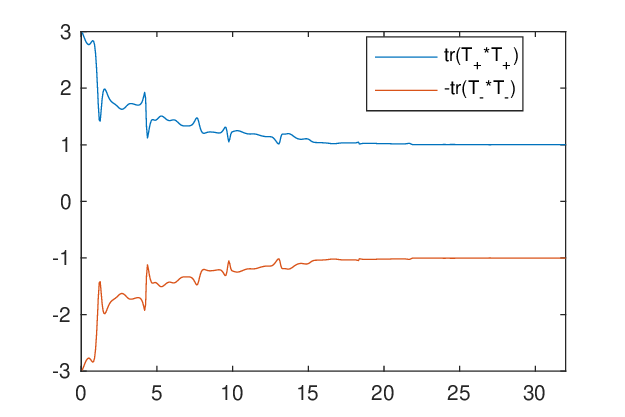}
		\caption{FTR-Symmetry preserving $V_{TRS}$ in  \eqref{eq:V2TRS}}
  \end{subfigure}
\begin{subfigure}{0.32\textwidth}
		\includegraphics[width = \linewidth]{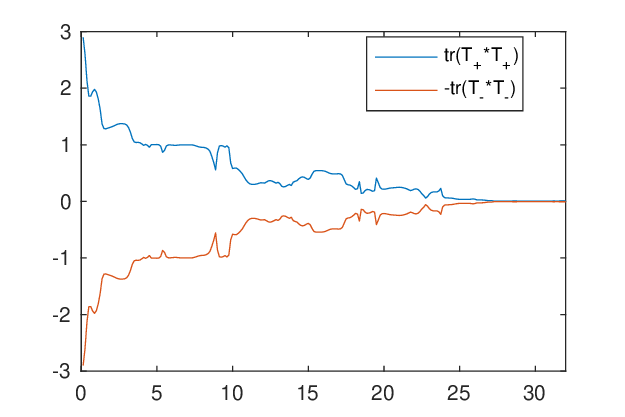}
		\caption{FTR-Symmetry breaking  $V_{NTR}$ in  \eqref{eq:V2NTR}}
  \end{subfigure}
  \begin{subfigure}{0.32\textwidth}
		\includegraphics[width = \linewidth]{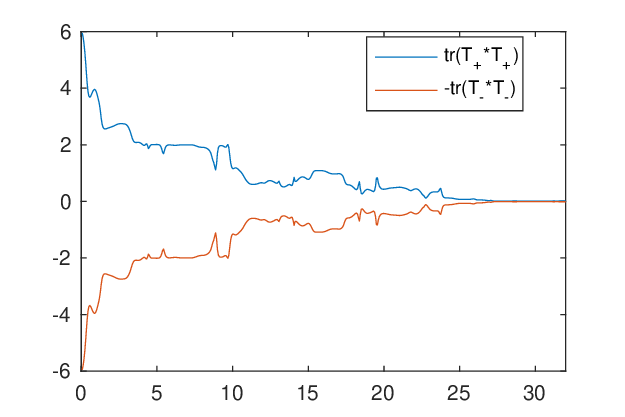}
		\caption{FTR-Symmetry preserving  $V_{TRS,M=2}$ in  \eqref{eq:V28x8} with $M=2$ }
  \end{subfigure}
\caption{One-sided transmissions against the length of the perturbation support. Blue: $\Tr  (T_+^*T_+)$, Red: $-\Tr( T_-^*T_-)$. }
    \label{fig:TR_symmetry}
\end{figure}

In both cases, the interface conductivity $2\pi\sigma_I[H+V]=\Tr (T^*_+T_+)-\Tr( T^*_-T_-)$ clearly vanishes (with machine-precision accuracy) regardless of the perturbation, which is consistent with Lemma \ref{lem:symsigma}. More interestingly, $\Tr( T^*_+T_+)$, which equals $3$ when $V=0$, is no longer constant and depends on the perturbation $V$. We note that $\Tr( T^*_+T_+)\geq1$ for $V$ satisfying a FRT-symmetry. Moreover, as expected from the results in \cite{topological}, we observe that $\Tr( T^*_+T_+)$ converges to $1$ as $l$ increases for the FRT-symmetric $V$ while it converges to $0$ for the more general non-FRT-symmetric $V$. This confirms the $\Zm_2$ index as a (partial, quantized) obstruction to Anderson localization (all modes  but one localize in the non-trivial $\Zm_2$ case). In the non-FRT-symmetric $V$, full Anderson localization is numerically observed as one-sided transmissions converge to $0$ in the presence of strong randomness. 

\medskip

Let us now consider the operator $H_V=(h+\bar h)^{\oplus2}+V_{TRS,2}$ corresponding to $M=N=2$ while still $p=1$. The FTR-symmetric perturbation is given by:
\begin{align}\label{eq:V28x8}
     V_{TRS,M=2}(x,y;l)=v_1(x,y)\begin{bmatrix}
    I_2 & & & (1-i)\sigma_2 \\ 
    & -I_2&  (1-i)\sigma_2 &   \\
    & (1+i)\sigma_2& I_2 &\\
   (1+i)\sigma_2 & & & -I_2
\end{bmatrix}\chi_{[0,l]}.
\end{align}
Since $M$ is even, we obtain from earlier sections that $\ind_2\, H=\ind_2\, H_V=1$ and that the Hamiltonian is thus topologically trivial.
We present the one-sided transmissions in Fig.\ref{fig:TR_symmetry}(c). We observe that these transmissions decay to $0$ as randomness increases, which is consistent with (full) Anderson localization.

We also note that the transmissions have the same profile in Fig.\ref{fig:TR_symmetry}(b) and (c), with the latter being equal to twice the former. This is not a coincidence. The operator $H_V=(h\oplus\bar h)^{\oplus2}+V_{TRS,2}$ is equivalent to $((h\oplus\bar h)+V_{NTR})\oplus ((h\oplus\bar h)-\overline{ V_{NTR}})$, by implementing the unitary transformation 
\begin{align}
    h_1\oplus \bar h_1 \oplus h_2 \oplus \bar h_2 \to h_1\oplus \bar h_2 \oplus h_2 \oplus \bar h_1.
\end{align}
It remains to verify, which we did numerically, that $(h\oplus\bar h)-\overline{ V_{NTR}}$ and $(h\oplus\bar h)+V_{NTR}$ have the same one-sided transmission $\Tr (T^*_+T_+)$.

The comparison between \eqref{eq:V2TRS}, \eqref{eq:V2NTR} and \eqref{eq:V28x8} is enlightening. While $h_1$ and $\bar h_1$ can only be coupled by a FTR-non-symmetric perturbation, a FTR-symmetric extension of that perturbation may be used to couple $h_1$ and $\bar h_2$ as well as $h_2$ and $\bar h_1$. This coupling is at the core of the classification presented in earlier sections.

\subsection{More general cases of FTR-symmetric operators}
We now simulate scattering matrices of more general Dirac models and investigate the asymptotic behavior of the one-sided transmissions.
For a generalized Dirac system $H_V=h^{\oplus M}\oplus\bar{h}^{\oplus M}+V$, then $H_V$ is FTR-symmetric when
\begin{align}\label{eq:TRS_Vform}
    V=\begin{bmatrix}
    V_{1} & -\bar{V}_{2} \\ V_{2} & \bar{V}_{1}
\end{bmatrix},
\end{align}
where $V_{1}$ is a $2M\times2 M$ Hermitian operator and $V_{2}$ is a $2M\times 2M$  anti-symmetric operator.

\paragraph{Case $M=1$ and $p=2$.}
Assume further that $E=3$ for the unperturbed system $H$, corresponding to four left propagating modes with indices $\{(1,2,+1),(2,0,+1),(2,1,+1),(2,2,+1)\}$ and four right propagating modes given by $\{(1,0,-1),(1,1,-1),(1,2,-1),(2,2,-1)\}$. Hence, for such an unperturbed system $H$,  we have $\Tr  (T_+^*T_+)=3=\Tr  (T_-^*T_-)$.

Since $p=2$ is even, we obtain $\ind_2\, H = \ind_2\, H_V=0$ from our theoretical section. We consider two perturbations respecting the FTR-symmetry and compute the one-sided transmissions against the length of the support of $V$. The numerical discretization is the same as one stated in Sec.\eqref{sec:4x4TRS}. The perturbation is constructed in the form of \eqref{eq:TRS_Vform}. In the first case, $V_1$ in \eqref{eq:TRS_Vform} is a $4\times 4$ zero matrix, while, 
\begin{align}\label{eq:eg3_V2}
    V_2(x,y;l)=(1+i)v_2(x,y)\sigma_2\chi_{[0,l]}=v_2(x,y)\begin{bmatrix}
    0 & 1-i \\ i-1 & 0 
\end{bmatrix}\chi_{[0,l]}
\end{align} where,
\begin{align}\label{v_p2}
    v_2(x,y)=(1+y)\exp(-y^2)\Big(&\cos((-E-\sqrt{E^2-8})x)+\cos((-E+\sqrt{E^2-8})x)\\ \nonumber
    &+\cos(2\sqrt{E^2-8}x)+\cos(2Ex)\Big).
\end{align}
From Fig.\ref{fig:eg3_M2}(a), we observe that the one-sided transmissions do not vanish and converge to the integer $2$.  However, if we set in  \eqref{eq:TRS_Vform}
 \begin{align}\label{eg3_p2_V1}
     V_1(x,y;l)=v_2(x,y)\sigma_3\chi_{[0,l]}=v_2(x,y)\begin{bmatrix}
    1 & 0 \\ 0 & -1 
\end{bmatrix}\chi_{[0,l]},
\end{align}
and keep $V_2$ as in \eqref{eq:eg3_V2}, then we observe in Fig.\ref{fig:eg3_M2}(b) a complete asymptotic Anderson localization with one-sided transmissions rapidly decaying to $0$. 

The latter example is consistent with our theoretical result that $\ind_2 H = \ind_2 H_V=0$. The former example is not. It shows that additional hidden symmetries may be present in the system forcing $\Tr  (T_+^*T_+)$ to transition from a value of $3$ in the absence of perturbations to an asymptotic value of $2$ in the presence of large perturbations that much reflect an additional hidden symmetry besides FRT. Perhaps surprisingly, for all other perturbations $V$ tested for $M=1$, $p=2$, the limit of the one-sided transmissions was always an even integer ($0$ or $2$). 

We now consider the non-FTR-symmetrix system $H_V=h + v_2\sigma_2\chi_{[0,l]}$. The perturbation is the same as in \eqref{eg3_p2_V1}. The current observable $2
\pi \sigma_I$ is equal to $-2$, which is no longer trivial as for FTR-symmetric operators. Fig.\ref{fig:eg3_M2}(c) presents one-sided transmissions. We observe that the limits of the left and right transmissions are $-2$ and $0$, respectively. This shows that an asymmetric transport with $2$ unit of quantized left travelling current is topologically protected from Anderson localization.
\begin{figure}[ht!]
    \centering
      \begin{subfigure}{0.32\textwidth}
		\includegraphics[width = \linewidth]{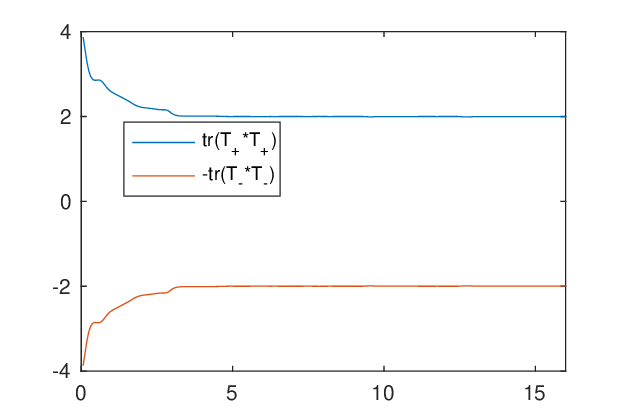}
		\caption{$h+\bar h$ model; $V_1=0,\,V_2$ in \eqref{eq:eg3_V2}}
  \end{subfigure}
\begin{subfigure}{0.32\textwidth}
		\includegraphics[width = \linewidth]{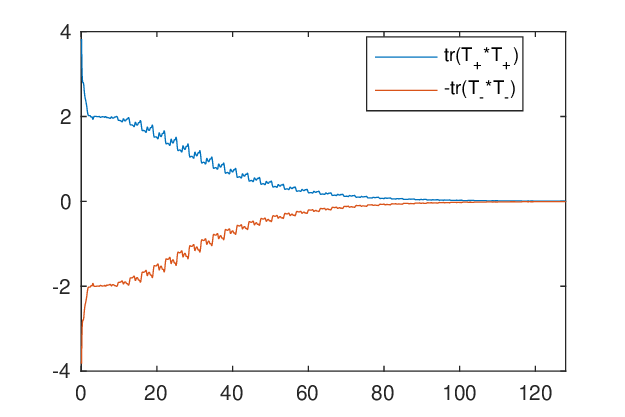}
		\caption{$h+\bar h$ model, $V_1$ specified in \eqref{eg3_p2_V1}}
  \end{subfigure}
     \begin{subfigure}{0.32\textwidth}
		\includegraphics[width = \linewidth]{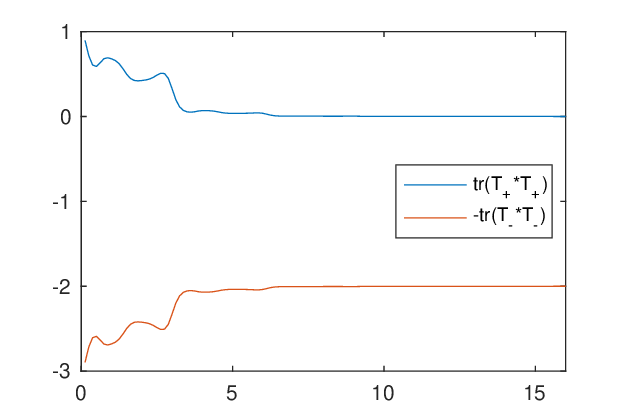}
		\caption{$h$ model, $V=v_2\sigma_3\chi_{[0,l]}$ }
  \end{subfigure}
\caption{One-sided transmissions  $\Tr(T_+^*T_+)$ and $-\Tr(T_-^*T_-)$ against the length of perturbation in case of $M=1$, $p=2$. }
    \label{fig:eg3_M2}
\end{figure}

\paragraph{Case $M=3$ and  $p=1$.}
As a final example, we consider the unperturbed system $H=(h\oplus\bar h)^{\oplus3}$ acting on a $12\times 12$ spinor. At energy $E=1.8$, the system has $9$ left (right) propagating modes. We consider two types of perturbation $V$ preserving the TRS and are constructed in the form \eqref{eq:TRS_Vform}. In both cases, since $ V_{2}$ defines the correlation between different $h$, we generally set
\begin{align}
     V_{2}(x,y;l)=v_1(x,y)\mathbf{1}_3\otimes \sigma_2\chi_{[0,l]}
\end{align}
where $\mathbf{1}_3$ is an $3\times 3$ matrix whose elements are all $1$. For $V_1$ in \eqref{eq:TRS_Vform},  in the first example, we assume,
\begin{align}\label{eg4_V1_EXS}
     V_{1}(x,y;l)=v_1(x,y) I_3\otimes \sigma_0\chi_{[0,l]}.
\end{align}
Fig.\ref{fig:eg4_M3}(a) presents the one-sided transmissions against the length of the support of the perturbation $l$. The limit of large randomness turns out to be equal to $3$. 

We now consider the same configuration with $V_1$ of the form:
 \begin{align}\label{eg4_V1_NEX}
     V_{1}=v_1 \begin{bmatrix}
         1 & &\\ & -1 & \\ & & 2
     \end{bmatrix}\otimes \sigma_0.
\end{align}
We observe in Fig.\ref{fig:eg4_M3}(b) that $\Tr(T_+^*T_+)$ now converges to the expected value $1$ instead of the larger value $3$ obtained in Fig.\ref{fig:eg4_M3}(a). When $V_1$ is constructed using \eqref{eg4_V1_EXS}, it turns out that the perturbed system $H_V$ admits three building blocks of $(h\oplus \bar h)$ that are exchangeable. Such exchange symmetry apparently prevents $\Tr(T_+^*T_+)$ from converging to an integer that is less than the number of building block $3$. This reflects another (not so) hidden symmetry of the system beyond the FTR symmetry.
\begin{figure}[ht!]
    \centering
      \begin{subfigure}{0.45\textwidth}
		\includegraphics[width = \linewidth]{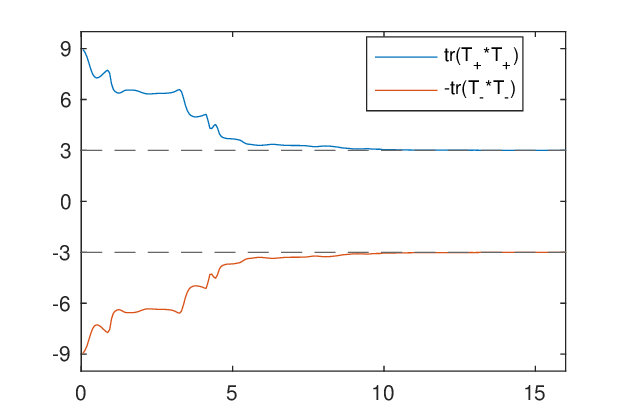}
		\caption{$V_1$ specified in \eqref{eg4_V1_EXS}}
  \end{subfigure}
\begin{subfigure}{0.45\textwidth}
		\includegraphics[width = \linewidth]{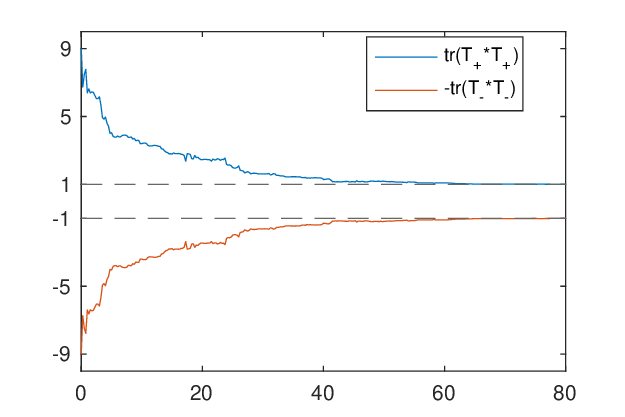}
		\caption{$V_1$ specified in \eqref{eg4_V1_NEX}}
  \end{subfigure}
\caption{One-sided transmissions  $\Tr(T_+^*T_+)$ and $-\Tr(T_-^*T_-)$ against the length of perturbation in case of $M=3$, $p=1$.  }
    \label{fig:eg4_M3}
\end{figure}

\section*{Acknowledgment} This work was supported in part by the US National Science Foundation Grants DMS-2306411 and DMS-1908736.

\bibliographystyle{siam}
\bibliography{ref.bib}

\end{document}